\def\bSig\mathbf{\Sigma}
\newcommand{\indep}{\perp \!\!\! \perp}
\newtheoremstyle{sf}
{2ex}
{2ex}
{\itshape}
{ }
{\scshape\sffamily\bfseries}
{. }
{ }
{ }
\theoremstyle{sf}
\newtheorem{definition}{Definition}
\newtheorem{remark}{Remark}
\newcommand{\hbeta}{{\widehat\beta}}
\newcommand{\hOmega}{{\widehat\Omega}}
\patchcmd{\@maketitle}{\LARGE \@title}{\fontsize{16}{19.2}\selectfont\@title}{}{}
\newtheorem{assumption}{Assumption}
\begin{document}

\title{An Enhanced Cross-Sectional HIV Incidence Estimator that Incorporates Prior HIV Test Results}

\author[1]{Marlena Bannick}
\author[2,3]{Deborah Donnell}
\author[4]{Richard Hayes}
\author[5,6]{Oliver Laeyendecker}
\author[2,3]{Fei Gao}

\affil[1]{Department of Biostatistics, University of Washington, Seattle, Washington, U.S.A.}
\affil[2]{Bioinformatics and Epidemiology Program, Fred Hutchinson Cancer Center, Seattle, Washington, U.S.A.}
\affil[3]{Public Health Sciences Division, Fred Hutchinson Cancer Center, Seattle, Washington, U.S.A.}
\affil[4]{Department of Infectious Disease Epidemiology, London School of Hygiene and Tropical Medicine, London, England, U.K.}
\affil[5]{School of Medicine, Johns Hopkins University, Baltimore, Maryland, U.S.A.}
\affil[6]{Division of Intramural Research, National Institute of Allergy and Infectious Diseases, Baltimore, Maryland, U.S.A.}


\maketitle

\begin{abstract}
    Incidence estimation of HIV infection can be performed using recent infection testing algorithm (RITA) results from a cross-sectional sample. This allows practitioners to understand population trends in the HIV epidemic without having to perform longitudinal follow-up on a cohort of individuals. 
    The utility of the approach is limited by its precision, driven by the (low) sensitivity of the RITA at identifying recent infection. 
    By utilizing results of previous HIV tests that individuals may have taken, we consider an enhanced RITA with increased sensitivity (and specificity). We use it to propose an enhanced estimator for incidence estimation.
    We prove the theoretical properties of the enhanced estimator and illustrate its numerical performance in simulation studies. 
    We apply the estimator to data from a cluster-randomized trial to study the effect of community-level HIV interventions on HIV incidence. We demonstrate that the enhanced estimator provides a more precise estimate of HIV incidence compared to the standard estimator.
\end{abstract}

\section{Introduction}
\label{s:intro}

The past decade has seen tremendous progress in the development of biomedical agents and interventions for HIV prevention.
For example, daily oral Tenofovir-emtricitabine (TDF-FTC) has been shown to have high efficacy for preventing HIV when taken as directed based on results from multiple randomized, placebo-controlled trials
\citep{grant2010preexposure, baeten2012antiretroviral, McCormackSheenaProf2016Pptp, MolinaJean-Michel2015OPPi}.
More recently, cabotegravir as long-acting injectable pre-exposure prophylaxis (PrEP) has been proven highly efficacious in men who have sex with men (MSM) and transgender women (TGW) \citep{landovitz2021cabotegravir} and in women \citep{moretlwe2021long}.
 Determining the real-world effectiveness of these interventions and their community-level implementation strategies, requires monitoring the incidence of HIV over time.
 HIV incidence is typically assessed through longitudinal cohorts, which is resource-intensive and time-consuming.
 Moreover, recruiting individuals to longitudinal cohorts could potentially introduce retention bias or modify their HIV risk, which is also known as the ``Hawthorne effect" \citep{sherr2007voluntary}.
 
 Incidence estimation combining lab-based test results from a cross-sectional sample using a recent infection testing algorithm (RITA) is a promising alternative technique that avoids costly longitudinal follow-up \citep{kassanjee2012new, gao2021statistical}.
 The RITA is comprised of single or multiple biomarker assays that classify HIV infections as recent (for example, infections occurring in the last year) versus non-recent.
 Common RITAs include the limiting-antigen (LAg) avidity assay \citep{duong2015recalibration}, in combination with viral load \citep{kassanjee2016viral,parkin2022facilitating} or other avidity assays \citep{laeyendecker2018identification}. These assays identify recent infections by measuring either the concentration, or the binding strength, of an individual's HIV antibodies, both of which increase with duration of infection \citep{weiDevelopmentTwoAviditybased2010}.
 Cross-sectional incidence estimators \citep{kaplan1999snapshot, kassanjee2012new} can estimate the incidence in the target population, based on a representative cross-sectional sample under mild assumptions regarding the epidemic in the target population and properties of the RITAs \citep{gao2021statistical}.
 
 Accurate incidence estimates using RITAs generally require large sample sizes \citep{parkin2022facilitating} such that the utility of RITAs is often limited by their precision.
 For example, in the HPTN 071 (PopART) study \citep{klockValidationPopulationlevelHIV12021} -- a cluster-randomized trial of the effect of community-level HIV interventions on HIV incidence in Zambia and South Africa -- a cross-sectional sample of 20,472 individuals yielded 73 recent infections based on the RITA that includes the LAg avidity assay and viral load, leading to an incidence estimate of 1.29 per 100 person years with a confidence interval of 0.97-1.62 per 100 person-years.

In addition to the size of the cross-sectional sample, the precision of the cross-sectional incidence estimator also depends on the ability of the RITA to correctly identify recent infections. Previous studies validating several RITAs showed that only 25-46\% of infections occurring in the past year were identified \citep{grant-mcauleyEvaluationMultiassayAlgorithms2021} due to the requirement that a RITA rarely misclassify non-recent infections as recent. Heuristically, this is like balancing sensitivity and specificity. The classification ability of a RITA by could potentially be improved by using available information on timing of HIV infection from prior HIV test results.
Depending on the setting, many individuals in the cross-sectional sample may have had recent prior HIV testing based on records from clinics or other testing centers.
The timing of when a prior HIV test was administered, and its result, are useful pieces of information to strengthen the evidence on whether someone has been recently infected, supplementing the information from the RITA.
As an example, if ``recent'' infection is defined as infection occurring within the last year, and someone had a prior HIV-negative test result six months ago but tests positive today, then they must have been recently infected, even if their recency assay result disagrees. Heuristically, incorporating this additional source of information may help capture more recent infections, leading to more precise estimates.

In this paper, we propose an approach to estimate incidence based on RITAs applied to samples from cross-sectional surveys that also utilizes prior HIV test results.
We first define a new algorithm that classifies recent infection using both recency assay results and available prior HIV test results.
This new algorithm has higher sensitivity and specificity than those of the original RITA.
Then, we construct an incidence estimator based on this new algorithm, by following a similar derivation as in \citet{gao2021statistical}.
We articulate the assumptions required for our enhanced estimator to be valid and examine the performance of the enhanced estimator under realistic simulation scenarios, including under violation of such assumptions.
Finally, we apply our enhanced estimator and the standard estimator to data from the HPTN 071 (PopART) study \citep{hayesEffectUniversalTesting2019, grant-mcauleyEvaluationMultiassayAlgorithms2021}.

\section{Methodology}
We propose an enhanced cross-sectional incidence estimator that builds on the framework used to derive the commonly used \citet{kassanjee2012new} estimator, which we will refer to as the ``standard estimator.'' First, we introduce the methodological framework for the standard estimator. In Section \ref{sec:kassanjee}, we briefly review the intuition behind the derivation of the standard estimator. In Section \ref{sec:enhanced}, we propose a new estimator that further incorporates prior HIV testing results.

Consider a cross-sectional sample that includes $N$ independent individuals. Without loss of generality, assume that the cross-sectional sample is taken at one point in calendar time for all individuals, denoted as $t_{cs}$.
Each individual is first tested for HIV infection.
Let $D_i$ be the indicator of HIV status for individual $i=1,\dots,N$, such that $N_{pos} = \sum_{i=1}^N D_i$ individuals are HIV-positive and $N_{neg} \equiv N - N_{pos}$ individuals are HIV-negative. Let $I_i$ be the calendar infection time for an HIV-positive individual $i$, with $I_i \leq t_{cs}$.
Let $U_i = t_{cs} - I_i$ be the (unobserved) duration of infection in an HIV-positive individual $i$ at time $t_{cs}$. For an individual who is found to be HIV-positive, i.e., $D_i=1$, they are tested using a RITA to determine whether this infection is ``recent'', i.e., with a duration of at most $T^*$.
Here, $T^*$ is a pre-specified cut-off time for indicating a relevant time span for incidence estimation which is usually set to be 2 years. 
\begin{definition}[Recently Infected]\label{def:recent-infection}
    We refer to an HIV-positive individual as ``recently infected'' if their duration of infection is no more than $T^*$, i.e., $U_i \leq T^*$. We refer to all other HIV-positive individuals as non-recently infected.
\end{definition}
Let $R_i$ be the indicator of being identified as a recent infection by the RITA for individual $i$ with $D_i = 1$, and $N_{rec} = \sum_{i:D_i=1} R_i$ be the total number of RITA-recent individuals (Definition \ref{def:rita-recent}). Thus, the full data from the cross-sectional sample at calendar time $t_{cs}$ is $(D_i, R_i)$, $i = 1, ..., n$, where $R_i$ is only defined for individuals with $D_i = 1$.

\begin{definition}[RITA-recent]\label{def:rita-recent}
    We refer to an HIV-positive individual $(D_i = 1)$ as RITA-recent if they are identified as recent by the recent infection testing algorithm (RITA), i.e., $R_i = 1$.
\end{definition}

\subsection{\citet{kassanjee2012new} Estimator}\label{sec:kassanjee}

Two important properties of a RITA are its mean duration of recent infection (MDRI), and false recent rate (FRR). MDRI is the mean duration of RITA-recency amongst those recently infected individuals, calculated as $$\Omega_{T^*} = \int_0^{T^*}\Pr(R_i = 1 | U_i = u)du.$$ We use the notation of $\phi(u) := \Pr(R_i = 1 | U_i = u)$ defined as the probability of testing RITA-recent for a random individual infected $u$ time units ago, so that $\Omega_{T^*} = \int_0^{T^*} \phi(u) du$. FRR is the probability of testing RITA-recent for a randomly selected, non-recently infected individual, and is often assumed to be a constant value, $\beta_{T^*}$. Estimates of the MDRI and FRR for a RITA are denoted as $\hOmega_{T^*}$ and $\hbeta_{T^*}$, respectively, which are commonly constructed based on external studies of individuals that have known infection duration.

The probability of being identified as RITA-recent $\Pr(R_i=1)$ can be expressed as a function of incidence, prevalence, MDRI, and FRR \citep{kassanjee2012new, gao2021statistical}. 
A cross-sectional incidence estimator can be derived by solving for incidence in that expression and plugging in data from a cross-sectional sample $\{N_{neg}, N_{pos}, N_{rec}\}$, and the external estimates $\hOmega_{T^*}$ and $\hbeta_{T^*}$ (equation 25 in \citet{kassanjee2012new}):
\begin{equation}
 \tilde\lambda = \frac{N_{rec} - N_{pos}\hbeta_{T^*}}{N_{neg}(\hOmega_{T^*} - \hbeta_{T^*}T^*)}.\label{equ:kassanjee}
\end{equation}
The standard estimator above \citep{kassanjee2012new} is consistent for the incidence among the target population given a ``steady state'' condition of the epidemic \citep{gao2021statistical}. The precision of this estimator is related to the number of correctly identified RITA-recent individuals, which increases with increasing MDRI and decreasing FRR \citep{gao2020sample}.

\subsection{Incorporating prior HIV testing results}\label{sec:enhanced}

We propose an enhanced cross-sectional incidence estimator that uses the same data as the standard estimator, as well as information from prior HIV tests for HIV-positive individuals. We first introduce a new algorithm for identifying recent infection using information from both a RITA and prior HIV tests, which we call the PT-RITA (Definition \ref{def:PT-RITA-recent}). We discuss the characteristics of the PT-RITA, relative to the characteristics of the original RITA. Finally, we construct a new cross-sectional incidence estimator based on the PT-RITA, using the same intuition used to construct the original standard estimator $\tilde\lambda$.

\subsubsection{PT-RITA: An enhanced algorithm for recency testing}\label{sec:algorithm}
Let $Q_i$ be the indicator whether HIV-positive individual $i$ has a prior HIV testing result. In this paper, we consider only one prior HIV testing result per HIV-positive individual.
For individuals with $Q_i=1$, let $T_i$ be the time between their prior HIV test and $t_{cs}$, the time of the cross-sectional survey. We assume a bounded support for $T_i \in [0, \tau]$, $0 < \tau < \infty$. In other words, HIV-positive individuals report prior HIV test results from no more than $\tau$ time units ago. Let $\Delta_i=1$ and $\Delta_i = 0$ indicate an HIV-positive and HIV-negative result at $T_i$, respectively.
We emphasize that $Q_i$, $T_i$, and $\Delta_i$ are each random variables with realizations from i.i.d. observations across individuals. Note that just as with a RITA result, prior testing only provides additional information when individual $i$ is HIV-positive (i.e., $D_i=1$). HIV-negative individuals are not tested with the RITA, and likewise we would not use any prior HIV tests that they report. Thus, among an HIV-positive individual, their prior testing data can be represented as the triple $(Q_i, Q_iT_i,Q_i\Delta_i)$.

Recall that $R_i = 1$ indicates RITA-recency. We define two new indicators among those with $D_i = 1$ to help construct the new algorithm:
\begin{itemize}
    \item Let $R^*_i = I(R_i=0, T_i\leq T^*, \Delta_i =0, {Q_i = 1}$). These are HIV-positive individuals who were not RITA-recent, but who we know are recently infected because they tested negative for HIV within the recent $T^*$ time period (Definition \ref{def:recent-infection}). These individuals should be \textit{added} to the recent infection group.
    \item Let $L_i = I(R_i=1, T_i \geq T^*, \Delta_i = 1, {Q_i = 1})$. These are HIV-positive individuals who are RITA-recent, but who are known to be non-recently infected because they tested positive for HIV at least $T^*$ time units ago. These individuals should be \textit{removed} from the recent infection group.
\end{itemize}

Based on $R_i$, $R_i^{*}$, and $L_i$, we construct an enhanced recency testing algorithm that incorporates prior test results from HIV-positive individuals, defined in Definition \ref{def:PT-RITA-recent}.

\begin{definition}[PT-RITA-recent]\label{def:PT-RITA-recent}
    We refer to an HIV-positive individual $(D_i = 1)$ as PT-RITA-recent if $R_i^{PT} = 1$, where
    \begin{align*}
        R_i^{PT} = (R_i - L_i) + R_i^*.
    \end{align*}
    In other words, $R_i^{PT} = 1$ means that individual $i$ either (1) tested RITA-recent and is not determined to be non-recently infected based on their HIV testing history, or (2) is known to be recently infected based on their HIV testing history. Individuals with $L_i = 1$ are removed from the RITA-recent set, and individuals with $R_i^*$ are newly added to the RITA-recent set.
\end{definition}

Let $N^{PT}_{rec} = \sum_{i:D_i=1} R_i^{PT}$ be the total number of PT-RITA-recent individuals. Using PT-RITA-recency to classify recent infections can be viewed as a ``better'' algorithm than RITA-recency: it is better at distinguishing between recent and non-recent infections in terms of MDRI and FRR (shown in Section \ref{sec:ptrita-characteristics}).

\subsubsection{Characteristics of the PT-RITA}\label{sec:ptrita-characteristics}

The MDRI of the PT-RITA is defined as $\Omega_{T^*}^{PT} = \int_0^{T^*}\Pr(R_i^{PT} = 1|U_i = u) du$.
Our derivations show that the MDRI of this new recency testing algorithm can be written as (see Web Appendix A)
\begin{align}
    \begin{split}
    \Omega_{T^*}^{PT} 
      &= \Omega_{T^*} + E \left\{Q_i I(T_i \leq T^*) \int_0^{T_i}\{1-\phi(u)\} du\right\} 
      ,\label{eq:mdri-2}
      \end{split}
\end{align}
where the expectation is taken over the random variables $Q_i$ and $T_i$, and where $\Omega_{T^*}$ and $\phi(u)$ are the MDRI and test-recent functions from the original RITA.
From this expression, we see that if HIV-positive individuals have available prior HIV-negative tests within the $T^*$ recent past, the MDRI of the PT-RITA is larger than the MDRI of the RITA.
If $T_i = T^*$ for all $i$ with $Q_i=1$, then $\Omega_{T^*}^{PT} = T^*$, i.e., the new algorithm has the highest possible MDRI (perfect sensitivity) regardless of the function $\phi(u)$. This makes intuitive sense: if everyone is tested for HIV at $T^*$, then we definitively know who is infected within the last $[0, T^*]$ time units, and do not need to rely on the recency test result $R_i$. We also show that the FRR of the PT-RITA is no larger than the FRR of the RITA (see Web Appendix A).

\subsubsection{Incidence estimation based on the PT-RITA}

We now construct an incidence estimator based on the PT-RITA, which uses a similar derivation to the standard estimator \citep{kassanjee2012new, gao2021statistical}.
To derive the enhanced incidence estimator, we rely on the following two assumptions, similar to the assumptions necessary for the \citet{kassanjee2012new} estimator.

\begin{assumption}[Constant FRR]\label{as:constant-frr}
The probability of testing RITA-recent for an HIV-positive, non-recently infected individual is constant with respect to infection duration, i.e., $u \geq T^* \implies \phi(u) := \beta_{T^*}$.
\end{assumption}

\begin{assumption}[Uniformly Distributed Infection Times]\label{as:constant-inc}
For $u \in [0, \max(\tau, T^*)]$, $\Pr(I_i = t_{cs} - u | I_i \leq t_{cs}) = \Pr(I_i = t_{cs} | I_i \leq t_{cs})$, where $\tau$ is the upper bound of the support for the random variable $T_i$.
\end{assumption}

Assumption \ref{as:constant-frr} and a variation of Assumption \ref{as:constant-inc} with $u \in [0,T^*]$ are required for the \citet{kassanjee2012new} estimator. Assumption \ref{as:constant-inc} is implied if incidence and prevalence are constant within the time period of interest \citep{gao2021statistical}.
In addition, we make the following assumptions on the prior testing distribution and how it relates to the recency assays.
\begin{assumption}[Conditional Independence of Recency Assay Results and Prior Test Availability and Timing]\label{as:cond-indep}
The RITA result for an individual who is HIV-positive at $t_{cs}$ is independent of whether they had a prior HIV test and how long ago that test was taken, conditional on their infection duration, i.e., $R_i \indep (T_i, Q_i) | U_i$.
\end{assumption}

\begin{assumption}[Independence of Infection Timing and Prior Testing Availability and Timing]\label{as:independence-testing}
The availability and time of prior testing for an HIV-positive individual at $t_{cs}$, $({Q_i}, T_i)$, is independent of their infection duration $U_i$, i.e., $({Q_i},T_i) \indep U_i$.
\end{assumption}
As Assumption \ref{as:cond-indep} and \ref{as:independence-testing} are related to prior testing availability and timing, they only apply to those that are HIV-positive at time $t_{cs}$. Assumption \ref{as:cond-indep} requires independence between the RITA results and having had a recent prior HIV test (conditional on infection duration).
This assumption is likely to hold in practice, as it is reasonable to assume that for an individual who is infected, their biological reaction to the RITA is not influenced by whether they have had a prior HIV test or how long ago they were tested.
Assumption \ref{as:independence-testing} requires that whether and when the individual had a prior HIV test is independent of when they were infected. It is conceivable that Assumption \ref{as:independence-testing} will be violated in practice: non-recently infected individuals are less likely to have a recent HIV test. For example, individuals may stop taking HIV tests once they receive a positive result; individuals who have a higher HIV risk are possibly more likely to seek testing after HIV exposure.
We will evaluate impact of violation of Assumption \ref{as:independence-testing} in simulation studies with a data generating mechanism that induces this dependent relationship between prior test timing and infection duration. We later discuss how we might relax this assumption in future work.

Recall that the standard estimator can be derived by expressing the probability of testing RITA-recent $\Pr(R_i=1)$ in terms of epidemiological parameters and characteristics of the RITA. We perform a similar derivation: we express the probability of testing PT-RITA-recent $\Pr(R_i^{PT} = 1)$ in terms of the same epidemiological parameters, but using characteristics of the PT-RITA instead (see Web Appendix B).
We obtain an estimator $\hat\lambda$ which we refer to as the enhanced estimator:
\begin{align}\label{est}
    \hat\lambda = \frac{N_{rec}^{PT} - \sum_{i:D_i=1}\hat{\beta}_{T^*}\left\{1-I(T_i \geq T^*)Q_i\right\}}{N_{neg}\left(\hat{\Omega}_{T^*}  + \sum_{i:D_i=1}\left[Q_i \cdot I(T_i\le T^*)\int_{0}^{T_i} \{1-\hat{\phi}(u)\} du - \hat{\beta}_{T^*}\{T^* - Q_i \cdot I(T_i\geq T^*)T_i\}\right]/N_{pos}\right)}.
\end{align}
The expression for $\hat{\lambda}$ contains characteristics of the original RITA: $\hat{\Omega}_{T^*}$, $\hat{\beta}_{T^*}$, and $\hat{\phi}(u)$. This makes intuitive sense: the PT-RITA combines the data from the original RITA with prior HIV testing data. We derive an analytical form for the variance of $\hat{\lambda}$ in Web Appendix C, which can be estimated with sample statistics. 

Notice that use of $\hat{\lambda}$ requires an estimate of the entire $\phi(t)$ function for the original RITA, rather than just its integral $\Omega_{T^*}$. As with the standard estimator, we require consistency of $\hat{\Omega}_{T^*} \to \Omega_{T^*}$ and $\hat{\beta}_{T^*} \to \beta_{T^*}$. In addition, in order for $\hat{\lambda}$ to be consistent for $\lambda(s)$, the incidence at the cross-sectional sample time, it suffices for the following mild condition (Assumption \ref{as:consistency}) on $\hat{\phi}$ and $\phi$ to hold (which is satisfied for $\phi$ estimated with a generalized linear model). See Web Appendix B.2 for a proof of consistency of $\hat{\lambda}$.
\begin{assumption}[Consistency of $\hat{\phi}$]\label{as:consistency}
    We assume that $\hat{\phi}$ and $\phi$ are Lebesgue integrable functions on $[0, \tau]$, and that $\int_0^{\tau} |\hat{\phi}(u) - \phi(u)| du \to 0$ in probability with respect to the randomness in the data used to estimate $\hat{\phi}$.
\end{assumption}

\begin{remark}
    If no one has a prior HIV test available ($Q_i = 0$ for $i:D_i=1$), then $\hat{\lambda}$ reduces to the standard estimator $\tilde{\lambda}$ in \eqref{equ:kassanjee}.
    Furthermore, if $Q_i = 1$ and $T_i = T^*$ for $i:D_i=1$, (i.e., everyone took a prior test at exactly time $T^*$), then
    $$\hat{\lambda} = \frac{N^{PT}_{rec}}{ N_{neg} T^*},$$
    which is the standard estimator using a recency assay with a ``perfect'' MDRI equal to $T^*$, and an FRR of zero. Indeed, the data are similar to that collected in a longitudinal cohort where each individual's HIV status is known at the cohort entry and end of follow-up. However, the estimator above is slightly different from a simple incidence estimator (cases per person-years): the denominator involves HIV negative individuals at the end of follow-up rather than at cohort entry. The difference between these approaches is discussed in \citet{gao2020sample} as corresponding to an open versus closed cohort.
\end{remark}

\begin{remark}
    \citet{gao2021statistical} derived the ``mean shadow time'', $\omega^*$, of the standard estimator. Its interpretation is as follows: under linearly decreasing incidence over time, the incidence estimator at time $t_{cs}$ is consistent for incidence at time point $t_{cs} - \omega^*$. We use a similar derivation (see Web Appendix B.3) to show that the mean shadow time for the enhanced estimator is
\begin{align*}
    \omega^* := \frac{\left[\int_0^{T^*} \left\{\phi(u) - \beta_{T^*} \right\}u du \right] + \left[\int_{0}^{\tau} u E\left\{Q_i 1(u \leq T_i) \left[1(T_i \leq T^*) \left(1-\phi(u) \right) + 1(T_i \geq T^*) \beta_{T^*} \right] \right\} du \right]}{\left[\Omega_{T^*} -\beta_{T^*} T^*\right] + \left[E \left\{Q_i I(T_i \leq T^*) \int_{0}^{T_i} \{1-\phi(u)\} du\right\} + \beta_{T^*}E\left\{Q_i I(T_i \geq T^*) T_i\right\}\right]}.
\end{align*}
    The first terms in the numerator and denominator are equivalent to the mean shadow time of the adjusted estimator. The second terms in the numerator and denominator are functions of the distribution of prior HIV testing times and availability.
\end{remark}

\section{Simulation Study}

We perform a simulation study to assess the performance of the enhanced estimator $\hat{\lambda}$ in \eqref{est} and the standard estimator from \cite{kassanjee2012new}, in a variety of settings. We have two primary aims of these simulations: (1) to establish under what circumstances there are efficiency gains to using $\hat{\lambda}$ over $\tilde{\lambda}$, and (2) to assess the impact of violating key assumptions required for using $\hat{\lambda}$.

We generate cross-sectional data on HIV status at time $t_{cs}$ and infection timing according to Section 3.3 of \citet{gao2021statistical}, using the constant incidence described in their Section 3.1. Briefly, incidence is set to 0.032 cases per person-year, and prevalence is set to 0.29. We generate RITA data for HIV-positive individuals at time $t_{cs}$ according to the RITA characteristics 1B from \citet{gao2021statistical} Section 3.2. The RITA has an MDRI of 98 days and an FRR of 1.4\%. Based on these RITA characteristics, we use a similar external data simulation procedure to get $\hat{\phi}(t)$ described in Section 3.3.1 of \citet{gao2021statistical}. Instead of simply reducing $\hat{\phi}(t)$ to an estimate of MDRI (as was done in \citet{gao2021statistical}), we use the entire estimated $\phi(t)$ function (and its variance) in the point estimate and variance estimate for the enhanced estimator.

To generate data on prior HIV tests for all individuals who are HIV positive based on the cross-sectional data, we have: the test availability indicator $Q_i\sim \mathrm{Bernoulli}(q)$ where $q$ is the probability of having a prior HIV test available; the time of the prior HIV test $T_i\sim \mathrm{Unif}[a,b]$ where $a$ and $b$ are the min and max prior testing times; and the prior test result among those with $Q_i = 1$ as $\Delta_i = 1(T_i \leq U_i)$, i.e., $\Delta_i = 1$ if their prior test was taken after their infection time $I_i$.

\subsection{Efficiency Gains of the Enhanced Estimator}

Here we explore availability of prior test results for HIV-positive individuals and how they affect the efficiency of the enhanced estimator relative to the standard estimator.
Specifically, we consider:
\begin{enumerate}
	\item Varying probabilities $q$ of having prior test results available. We consider $q$ of 0.2, 0.4, 0.6, 0.8, and 1.0 to illustrate how the efficiency gain of the enhanced estimator increases with availability of prior testing.
	\item Varying intervals $[a,b]$ during which test results are available. We consider three scenarios: $[0, 2]$, $[0, 4]$, and $[2, 4]$ to illustrate how the efficiency gain of the enhanced estimator depends on prior testing timing. Specifically, $[0, 2]$ represents a case where all prior tests occur within the window of time that defines a recent infection, $[0, T^*]$.
\end{enumerate}
The results of these simulations are presented in Figure \ref{fig:main-sim}, with further details that show the performance of the variance estimator in Web Appendix D Table A1.
The incidence estimates of the standard estimator range from -0.01 to 0.08 cases per person-year (negative values may occur when there are very few individuals identified as RITA-recent). The bias of the standard estimator is 0.001 cases per person-year. The mean squared error (MSE) of the standard estimator is 0.0001.
The enhanced estimator has a similar amount of bias (which reduces with increasing proportion of prior tests available). Importantly, the enhanced estimator has a substantial reduction in MSE compared with the standard estimator, even when a small fraction of HIV-positive individuals has prior test results available. For example, using the enhanced estimator leads to an MSE reduction of 25\% when 20\% of HIV positive people have test results within the last four years. If 80\% of HIV individuals had a test within the last two years, the MSE reduction could be as large as 63\%. Based on Figure \ref{fig:main-sim}, we see that it is more efficient to use prior HIV tests within the ``recent'' time period than non-recent time period. For example, for a recent infection defined as within the last two years, it is more advantageous if 20\% of study participants have prior test results within the recent infection window, i.e., 0-2 years (58\% reduction in MSE) than within 2-4 years (19\% reduction in MSE). These large gains in precision do not come with any added bias, if the assumptions hold. 

\begin{figure}[htbp!]
\caption{Simulation results (5,000 simulated datasets) comparing the median point estimates, interquartile range, and reduction in MSE by using the enhanced estimators across range of prior test times (uniformly distributed within the range $(a, b)$) available for $q$ fraction of participants.}\label{fig:main-sim}
\includegraphics[width=\textwidth]{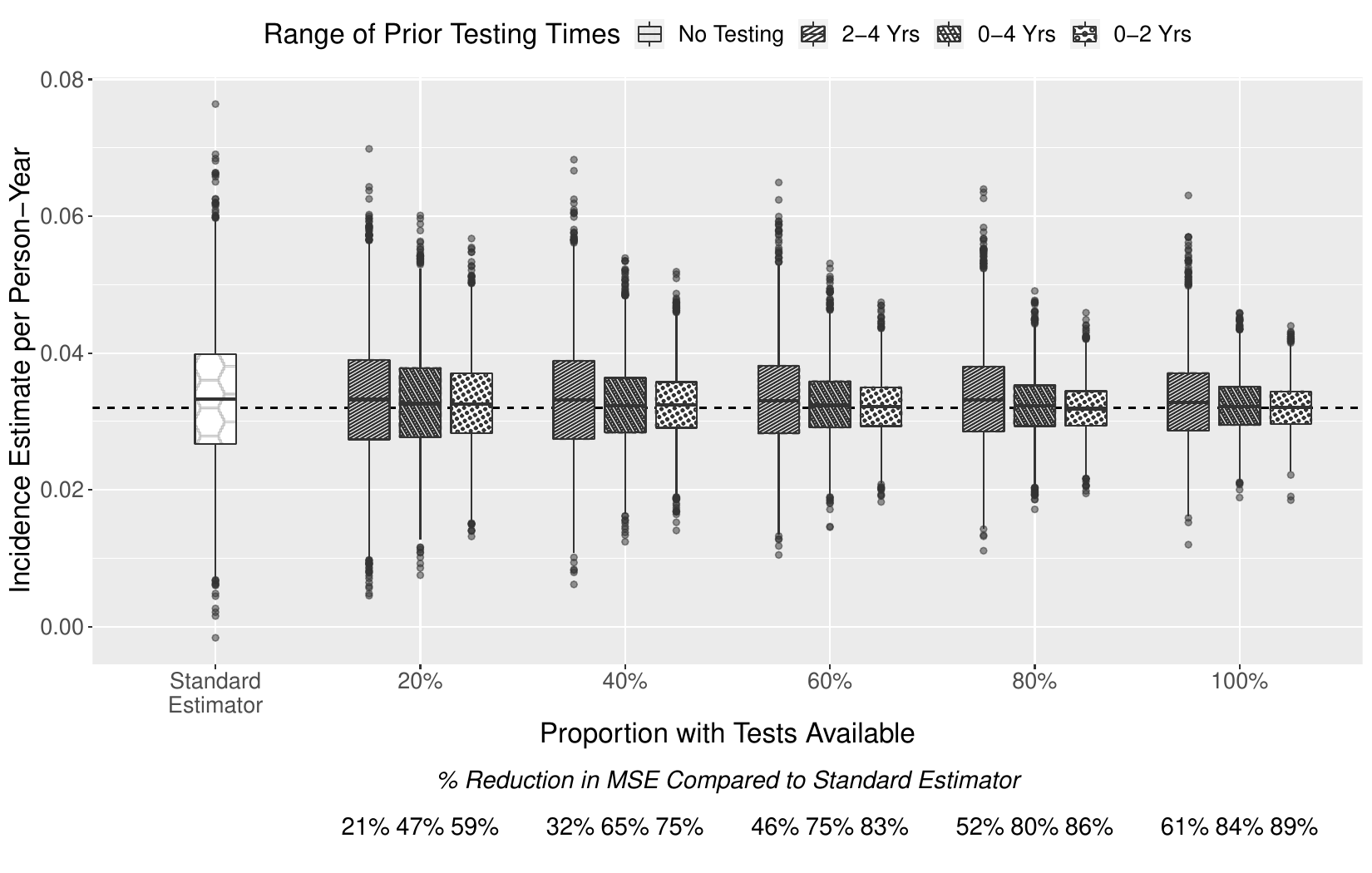}
\end{figure}

\subsection{Sensitivity of the Enhanced Estimator}

The enhanced estimator requires several assumptions beyond those which are required by the standard estimator and which are outlined in \citet{gao2021statistical}. We list these assumptions in Table \ref{tab:assumptions}. Most importantly, the enhanced estimator requires assumptions on the independence of characteristics of prior test results and infection characteristics.
We perform a sensitivity analysis with violations of Assumption \ref{as:constant-inc} (Section \ref{sens:constant-inc}) and Assumption \ref{as:independence-testing} (Section \ref{sens:independence-testing}). We do not explore violation of Assumption \ref{as:cond-indep} because it is very likely to hold in practice.

An implicit assumption in using the enhanced estimator is that the prior HIV testing data are accurate. We denote this in Table \ref{tab:assumptions} as Assumption $\star$. If data are collected only from lab measurements (e.g., study-administered HIV tests, records from HIV clinics), then this is not a concern. However, if data are self-reported by study participants, then there could be inaccuracies in the prior testing data.
In practice, the validity of testing results may be compromised due to recall bias, HIV stigma, or a desire to be eligible for a trial, especially when prior HIV tests are self-reported.
To assess the impact of this, we perform a sensitivity analysis where the prior HIV testing data are subject to recall bias in several dimensions.

\begin{table}
\centering
\caption{Comparison of assumptions required by standard and enhanced estimators}\label{tab:assumptions}
\begin{tabular}{ll|cc}
\hline
\multirow{2}{*}{Assumption} & \multirow{2}{*}{Description} & Standard & Enhanced \\
& & Estimator & Estimator \\
\hline
\hline
	\textbf{Assumption \ref{as:constant-frr}} & Constant FRR & $\checkmark$ & $\checkmark$ \\
	\textbf{Assumption \ref{as:constant-inc}} & Uniformly distributed infection times & Until $T^*$ & Until $\max(\tau, T^*)$\\
	\multirow{2}{*}{\textbf{Assumption \ref{as:cond-indep}}} & (Recency assay $\perp$ prior test & \multirow{2}{*}{$\times$} & \multirow{2}{*}{$\checkmark$} \\
    & timing and availability) $|$ infection duration & & \\
	\multirow{2}{*}{\textbf{Assumption \ref{as:independence-testing}}} & Infection timing $\perp$ prior test & \multirow{2}{*}{$\times$} & \multirow{2}{*}{$\checkmark$} \\
    & timing and availability & & \\
    \textbf{Assumption $\star$} & Validity of prior test results & $\times$ & $\checkmark$ \\
\hline
\end{tabular}
\end{table}

\subsubsection{Non-Constant Incidence (Assumption \ref{as:constant-inc})}\label{sens:constant-inc}

Assumption \ref{as:constant-inc} on uniformly distributed infection times is violated if incidence is non-constant \citep{gao2021statistical}. We focus on a scenario where incidence is constant from time $t_{cs} - T^*$ until $t_{cs}$, with $T^* = 2$ years, and linearly decreasing prior to time $t_{cs} - T^*$ (see Web Appendix E, Figure A5). In this setting, the constant incidence assumption of the standard estimator holds. However, if the upper bound of the support for $T_i$ ($\tau$) is greater than $T^*$, then the constant incidence assumption for the enhanced estimator is violated. We provide details on generating infection times according to this new incidence function in Web Appendix E.

Our results in Web Appendix D Table A2 show that the bias of the enhanced estimator is relatively small, except in the scenarios where the great majority of HIV-positive individuals have prior test results from several years ago (when incidence was much different than the current time). However, we also considered using the enhanced estimator and ignoring prior results where $T_i > T^*$ (``only recent''). This represents a practical scenario where the investigator knows that incidence has not been constant all the way back to the oldest prior test, but suspects that it has been relatively constant in recent years. Indeed, we can reduce the bias by excluding non-recent test results, with some small increase in variance.

\subsubsection{Dependence between Prior Testing and Infection Timing (Assumption \ref{as:independence-testing})}\label{sens:independence-testing}

Assumption \ref{as:independence-testing} states that if and when an HIV-positive individual took a prior HIV test is independent of when they became infected.
We examine the scenario when Assumption \ref{as:independence-testing} is violated by simulating HIV testing time from a mixture of two different testing mechanisms, with results in Table \ref{tab:scenario-a} (see Web Appendix E for details).
On top of the background testing where testing times are generated from a uniform distribution (``Base'' mechanism in Table \ref{tab:scenario-a}), there is another testing mechanism where individuals are more likely to take a test after they are infected with HIV, rather than before (``RI'' mechanism in Table \ref{tab:scenario-a}). This second infection-time-dependent testing mechanism could arise if individuals seek out testing after experiencing HIV-related symptoms. The combination of these two mechanisms results in negative bias: individuals are systematically more likely to have a positive prior HIV test than they are to have a negative prior HIV test. As a result, $L_i$ (known non-recent infection based on prior HIV tests) is indicated more frequently than $R_i^*$ (known recent infection based on prior HIV tests). Therefore, $N_{rec}^{PT}$ will be artificially smaller than $N_{rec}$, leading to an underestimate of incidence.

\begin{table}[htbp!]
\centering
\caption{Simulation results (5000 simulated datasets): comparing bias, standard error, and MSE of the standard and enhanced estimators across prior testing mechanisms. Under the ``Base'' mechanism, all HIV-positive individuals at time $t_{cs}$ received HIV tests within the past $[0, 4]$ years with probability $q$. Under the ``Base + RI'' mechanism, they are additionally given a test based on their infection time (see details in Web Appendix E). Given that there are two testing mechanisms in ``Base + RI'', the probability of a test being available is not the same as $q$; thus, $q^*$ is the average proportion of tests available across the simulations}\label{tab:scenario-a}
\begin{tabular}{cc|c|cccc}
  \hline
  $q$ & Mechanism & $q^*$ & Bias$\times 10^2$ & SE$\times 10^2$ & MSE$\times 10^4$ & Coverage \\ 
  \hline
  \multicolumn{7}{c}{\textit{Standard Estimator}} \\
  \hline 
-- & -- & -- & 0.138 & 1.006 & 1.030 & 95.140 \\ 
  \hline
  \multicolumn{7}{c}{\textit{Enhanced Estimator}} \\
   \hline 
0.25 & Base & 0.250 & 0.091 & 0.710 & 0.512 & 94.960 \\ 
  0.25 & Base + RI & 0.633 & -0.759 & 0.493 & 0.820 & 63.800 \\ 
   \hline 
0.5 & Base & 0.500 & 0.045 & 0.552 & 0.307 & 95.360 \\ 
  0.5 & Base + RI & 0.755 & -0.573 & 0.448 & 0.529 & 72.420 \\ 
   \hline 
0.75 & Base & 0.750 & 0.026 & 0.473 & 0.225 & 94.640 \\ 
  0.75 & Base + RI & 0.878 & -0.424 & 0.416 & 0.353 & 80.120 \\
  \hline\end{tabular}
\end{table}

\subsubsection{Measurement Error in Prior HIV Testing Data (Assumption $\star$)}


We introduce the notation $(Q_i', Q_i'T_i', Q_i'\Delta_i')$ to denote prior testing data that are self-reported. We assume that the time of prior test is reported with an error of a standard deviation $\gamma$ of 0, 1 or 6 months. That is, we observe $T'_i$ in place of $T_i$, where {$T'_i = \min(0, T_i + \epsilon_i)$, with $\epsilon_i \sim N(0, \gamma)$}. We consider cases where 10\% of HIV-positive individuals with positive prior HIV tests incorrectly report never having been tested ($Q_i' = \xi_i \cdot Q_i \Delta_i $ with $\xi_i \sim \text{Binom}(0.9)$), or where 10\% of HIV-positive individuals with prior positive prior HIV tests incorrectly report a prior \textit{negative} HIV test result ($\Delta_i' = \eta_i \cdot Q_i \Delta_i $ with $\eta_i \sim \text{Binom}(0.9)$).
This last scenario is the most extreme, since individuals selectively change the result of their test. And as a result, in Figure \ref{fig:scenario-b}, we see that this scenario leads to the largest increase in bias. The addition of random error to the prior test time does not lead to a substantial increase in bias, but it does slightly decrease the precision of the enhanced estimator.

\begin{figure}[htbp!]
\caption{Simulation results (5,000 simulated datasets) comparing bias, interquartile range, and coverage by using the enhanced estimator under recall bias in $(Q_i', Q_i'T_i', Q_i'\Delta_i')$. In these scenarios, the prior test range is uniformly distributed between 0 and 4 years, and 50\% of HIV-positive individuals truly have a prior test available.}
\includegraphics[width=\textwidth]{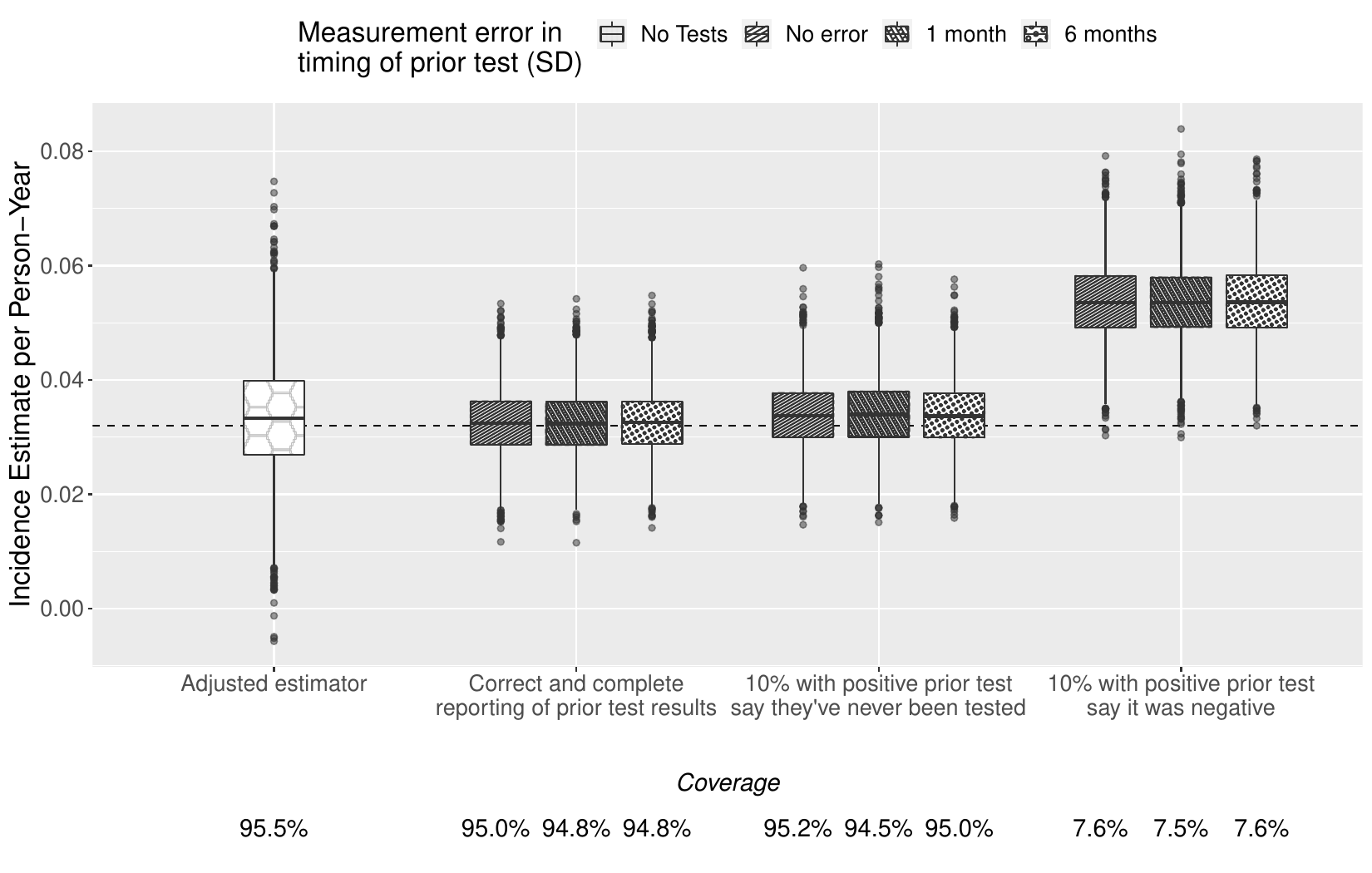}\label{fig:scenario-b}
\end{figure}

\section{Illustration with Trial Data}

\subsection{HPTN 071 PopART Data for HIV and RITA Data}

We apply the enhanced estimator to data from the HIV Prevention Trials Network Study 071, Population Effects of Antiretroviral Therapy to Reduce HIV Transmission (PopART). PopART was a cluster-randomized trial to study the effect of community-level HIV interventions on HIV incidence in Zambia and South Africa \citep{hayesHPTN071PopART2014}. Interventions randomized at the community level were implemented over the course of three years. HIV incidence was assessed via a longitudinal population cohort. HIV-positive participants also had HIV recency assessed at baseline (PC0) and 24 months (PC24) using a RITA. Previous research has validated the use of the standard estimator for cross-sectional HIV incidence estimation using PopART data \citep{klockValidationPopulationlevelHIV12021}.

PopART data also provide a unique opportunity to validate and assess the real-world performance of the enhanced estimator. Figure \ref{fig:hptn-scheme} shows the PopART data availability. At two years, participants within the longitudinal cohort had been tested twice for HIV: once at baseline (PC0), and once at follow-up year one (PC12). We use the PC12 study-administered HIV test calendar time and results as prior testing data, along with the HIV test and RITA result at PC24, in our enhanced estimator for estimating incidence at the 24-month survey. We compare this with the standard estimator that uses only the cross-sectional HIV test and RITA result at PC24, and a longitudinal incidence estimator based on number of events over person-time at risk (which acts as the validation incidence) between PC12 and PC24. One difference between the longitudinal and cross-sectional estimators is who is included in the analysis. The longitudinal estimator requires that individuals were present at PC12 and PC24, whereas the cross-sectional estimators only require that individuals were present at PC24. We also consider a scenario where we randomly select only half of the prior testing data in HIV positives to be used for the enhanced estimator, which represents a more realistic scenario where not everyone has a prior test result.


\begin{figure}[htbp]
	\caption{Diagram of PC12 and PC24 data in the HPTN 071 PopART Study used for cross-sectional and longitudinal incidence estimation. Dark grey boxes represent the data used for constructing the RITA and enhanced RITA. PC12 is the 12-month time point, and PC24 is the 24-month time point since study initiation. $(Q_i', Q_i'T_i', Q_i'\Delta_i')$ denotes self-reported versions of $(Q_i, Q_iT_i, Q_i\Delta_i)$.}\label{fig:hptn-scheme}
	\centering
	\includegraphics[width=0.8\textwidth]{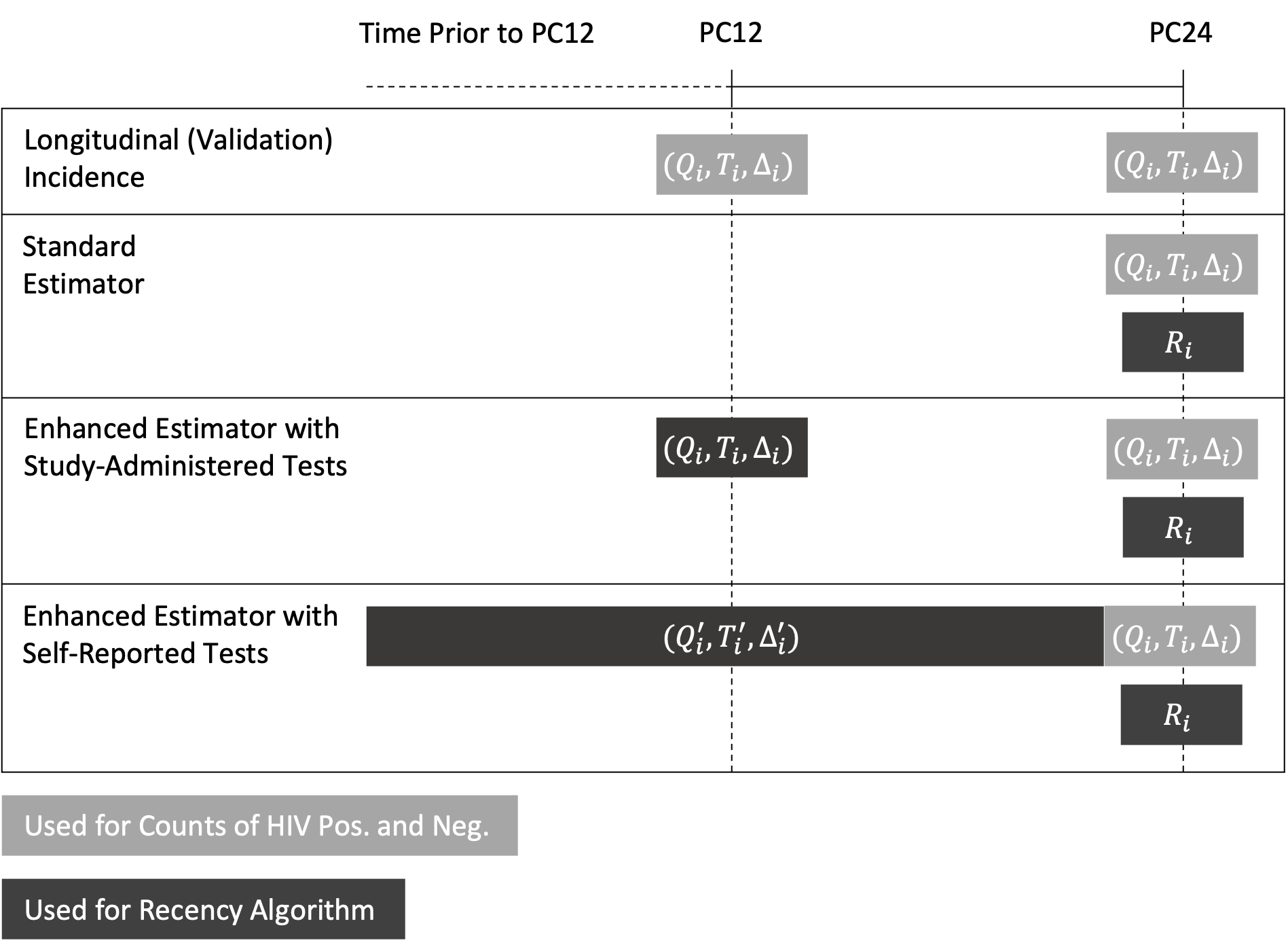}
\end{figure}

\subsection{CEPHIA Data for Estimation of RITA Characteristics}

The use of several RITAs (also referred to as multi-assay algorithms [MAAs] in cited publications) have been explored with PopART data \citep{klockValidationPopulationlevelHIV12021, grant-mcauleyEvaluationMultiassayAlgorithms2021}. We use the RITA that classifies someone as recently infected if their LAg ODn is less than 1.5 and their viral load is $> 1000$ copies/mL. To use the standard estimator, one only needs to extract an estimate of the MDRI for this RITA from relevant literature. As previously noted, this is not sufficient for the enhanced estimator: in addition to using an estimate of MDRI $\Omega_{T^*}$ for the RITA-recent algorithm, we also need its entire $\phi(t)$ test-recent function so that we can augment the estimator with prior test results (see equation \ref{est}). Here we estimate the $\phi(t)$ function using a public dataset from the Consortium for the Evaluation and Performance of HIV Incidence Assays (CEPHIA)'s evaluations of several HIV recency assays \citep{grebeCEPHIAPublicUse2021}. This dataset includes days post seroconversion, viral load, LAg-ODn, and HIV subtype. We use HIV subtype C since HIV infections in Zambia and South Africa are predominantly made up of subtype C \citep{hemelaarGlobalRegionalMolecular2019}. There are several functional forms that we could use to estimate the $\phi(t)$ function (e.g., different degree polynomial functions of either $U_i$ or $\log(U_i)$). Ultimately, they all have a very similar shape until 2 years (and resulting MDRI), which is all that is used for the enhanced estimator. We chose to estimate $\phi(t)$ using a 2nd degree polynomial on $\log(U_i)$.

We will estimate incidence using a 2-year definition of recent infection. Our estimate of MDRI on the CEPHIA data for 2 years is 160 days (95\% CI: 141 - 179). We use an FRR of 0\%, which was done in similar work \citep{klockValidationPopulationlevelHIV12021}. This is a reasonable assumption since our estimated test-recent function past 2 years infection duration is essentially zero. Only one person in the CEPHIA dataset was a false-recent leading to an observed FRR in CEPHIA data of about 0.1\%.

\subsection{Results}

\begin{figure}[htbp]
	\caption{Incidence estimates for the 21 communities in the HPTN 071 cohort at PC24. The red line is the standard estimator. The green line is the longitudinal incidence estimator (cases / person-time). The solid blue line is the enhanced estimator based on HPTN 071 study-administered prior HIV tests, and the dotted blue line is the same enhanced estimator, but excluding prior HIV tests for a randomly selected 50\% of HIV-positive individuals. The table below the plot includes: N Pos, the number of individuals identified as HIV positive at PC24 ($N_{pos}$); N Rec (RITA), the number identified as RITA-recent ($N_{rec}$); N Rec (Sero), the number of individuals who we know are recently infected from their prior HIV test because they seroconverted after time $t_{cs} - T^*$; N Rec (PT-RITA), the number identified as PT-RITA-recent ($N_{rec}^{PT}$). Confidence intervals created on the log incidence scale and exponentiated. Communities are ordered by longitudinal incidence estimate (lowest incidence on the left).}
	\centering
	\includegraphics[width=\textwidth]{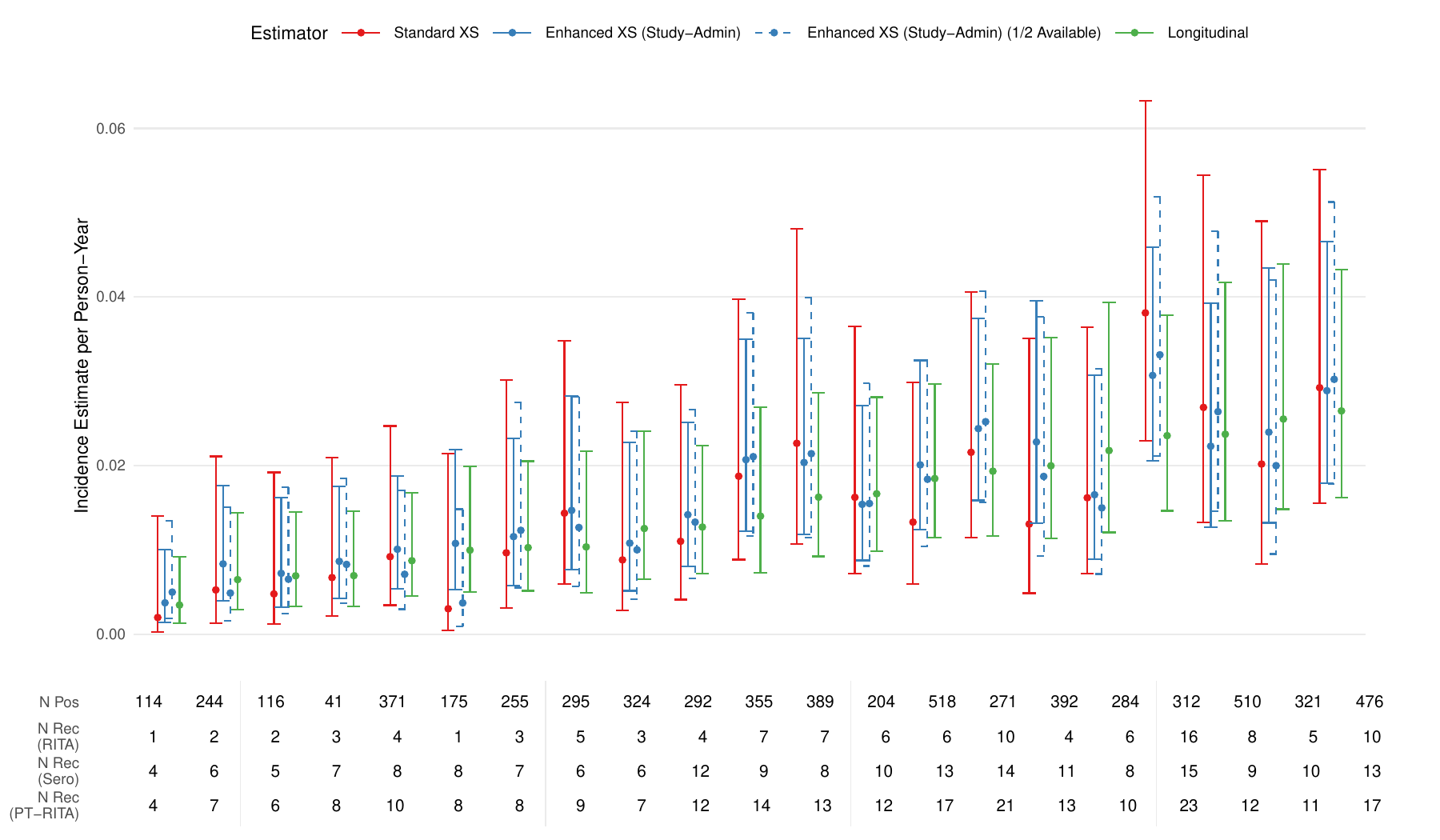}\label{fig:hptn-results}
\end{figure}

Figure \ref{fig:hptn-results} shows the estimated incidence at 24 months using the longitudinal estimator, the standard estimator, and the enhanced estimator with study-administered tests (including all of the prior HIV tests, and a randomly selected half of the prior HIV tests in HIV-positive individuals). Both the standard and enhanced estimators appear unbiased. In all communities, the enhanced point estimate falls within the confidence interval of the longitudinal estimator. In all but two communities, the same is true for the standard estimator. The confidence interval width for the enhanced estimator is usually a lot shorter than that of the standard estimator and is comparable to that of the longitudinal estimator.
The average variance reduction by using the enhanced estimator with study-administered HIV tests over the standard estimator across communities is 33\%. If we only include half of the prior testing data available, the variance is still reduced by 23\%.

We include counts of individuals identified as recent by the RITA and the PT-RITA below the main plot. All of the prior HIV tests occurred during the window defining recent infection, i.e., no individuals had $L_i = 1$. Thus, the total number of individuals with $R_i^* = 1$ is $N_{rec}^{PT} - N_{rec}$. This is the number of individuals who did not test RITA-recent, but who had a prior HIV test indicating recent infection and are therefore \textit{newly} identified by the PT-RITA.

Since the enhanced estimator and longitudinal estimator share a lot of data in this particular data analysis, we did an additional analysis where we randomly split the data 50-50 for each community into training and validation samples. The standard and enhanced estimators used data from the training sample, whereas the longitudinal estimator uses the validation sample. The results for this sample splitting analysis are similar to the main analysis (see Web Appendix D Figure A2).

In the study, participants are asked about the timing and results of the last HIV test that they have received at the 24 month visit.
These records are kept separately from the study-administered HIV testing results, so they provide an opportunity to check the validity of self-reported data.
Web Appendix D Figure A1 shows the estimated incidence at 24 months using the enhanced estimator with self-reported testing results. 
Using self-reported tests results in a large amount of positive bias in the incidence estimates, suggesting the possibility that some HIV-positive individuals with positive prior tests actually report negative prior tests (or no result at all), based on the results of our sensitivity analysis.

\section{Discussion}
Utility of cross-sectional incidence estimation using a recency testing algorithm is often limited by its power and precision.
In this paper, we consider incorporation of prior HIV test results to the cross-sectional incidence estimation and demonstrated that it has the potential to dramatically increase the precision of these estimates if prior testing results are available for those who test HIV positive. Even when only some HIV-positive individuals (e.g., 20\%) have prior tests available, the precision of the enhanced estimator is still substantially increased over the standard estimator.
In particular, increase in precision is mostly driven by the proportion of HIV-positive individuals with prior HIV-negative test results within the recent infection window.

With the increase in precision comes the reliance on additional assumptions on both the underlying HIV epidemic dynamics as well as the data generating process for prior HIV tests.
Rather than requiring constant incidence during the period prior to the cross-sectional survey that defines a recent infection (e.g., 2 years) as for the standard estimator, the enhanced estimator requires constant incidence strictly after the oldest prior test result used.
If prior test results were taken a long time ago, then the assumptions of the enhanced estimator may be more readily violated.
This scenario can be mitigated if non-recent prior test results are ignored.
And since the recency assay already has a low false recency rate (e.g., 1.4\% as we have in simulations), the gains from using the enhanced estimator come primarily from the recent prior HIV-negative tests, not the non-recent results.

The assumption on the data generating process of the prior test results (Assumption \ref{as:independence-testing}) is key to the validity of the enhanced estimator.
In our sensitivity analyses, we found substantially increased bias for the proposed enhanced estimator under violation of Assumption \ref{as:independence-testing}, when there could be two possible testing mechanisms: routine surveillance for HIV; and testing that is a function of how long the person has been infected. 
In future work, we will aim at relaxing Assumption \ref{as:independence-testing}, possibly through a propensity score approach accounting for important factors that affect individual's testing behavior, e.g., age, gender, and education level \citep{maheu-girouxDeterminantsTimeHIV2017}.

A criticism of cross-sectional incidence estimators in general is that individuals who have previously tested positive for HIV may not enter into the cross-sectional sample, which is a form of selection bias. This is a similar mechanism that leads to a violation of Assumption \ref{as:independence-testing}. Similar to what we have discussed in the prior paragraph, future work could leverage the prior testing data framework that we have introduced in this paper and with associated covariates to identify whether there is such selection bias and potentially correct for it.

Another practical issue that may lead to a violation of the required assumptions is the use of antiretroviral therapy (ART). If an individual receives a positive HIV test before time $t_{cs}$, they would likely be placed on ART, and as a result be virally suppressed. If the RITA at time $t_{cs}$ uses viral load, that individual would be less likely to be classified as recently infected \citep{grant-mcauleyEvaluationMultiassayAlgorithms2022}. In this way, a prior positive HIV test could influence the RITA result, violating Assumption \ref{as:cond-indep}. There are RITAs that do not use viral load \citep{grant-mcauleyEvaluationMultiassayAlgorithms2022}, which may be preferred for use with the PT-RITA approach to ensure that Assumption \ref{as:cond-indep} holds.

Application of the proposed enhanced estimation approach relies heavily on correct reporting of HIV testing results in HIV-positive individuals in the cross-sectional sample.
In our sensitivity analyses, we found that incorrect reporting of HIV test results would lead to substantial bias, especially when a prior HIV-positive result is reported to be HIV-negative. We see evidence of such bias in our analysis of the HPTN 071 data. When study-administered prior HIV tests were used in the enhanced estimator, the precision of the cross-sectional incidence estimates was substantially improved. When self-reported prior HIV tests reported at the same time point were used with the same recency test results, these incidence estimates had substantial bias.
Therefore, we recommend the proposed enhanced approach only with prior HIV test results conducted and recorded through a reliable source, e.g., routine clinic visits or infectious disease surveillance systems.

In addition to determining the effectiveness of population-based interventions as discussed in the introduction, cross-sectional incidence estimation with recency testing algorithm has also been suggested to estimate the incidence in a counterfactual placebo arm of an active-controlled trial \citep{gao2020sample,RAWG}, in order to assess efficacy of the trial products relative to placebo.
The recency testing algorithm can be applied to HIV-positive individuals in the screening population so that the cross-sectional incidence estimator may represent a concurrent placebo incidence. The validity of this interpretation requires several important assumptions, in addition to those required for the standard estimator \citep{gao2021statistical}. For example, there can be no PrEP use in the screening population, and no systematic differences between HIV risks of those who test HIV-positive and eligible HIV-uninfected persons who are enrolled in the trial \citep{gao2020sample}.
The proposed enhanced estimator is also applicable in such a setting to improve precision of incidence estimation.

In this paper, prior HIV testing results are represented as $(Q_i, Q_iT_i, Q_i\Delta_i)$, which means we can incorporate at most one prior HIV test result per person for the enhanced estimator. In settings where HIV tests are administered frequently (such as for populations enrolled in longitudinal household surveys or under routine surveillance for HIV infection), more individuals will have prior test results available, and individuals are more likely to have multiple test results.
Future work may extend the proposed methods to incorporate multiple tests per individual.
Additionally, current work leverages specific types of prior HIV testing results that would determine or exclude an individual as being recent.
Other testing results, for example an HIV-positive result less than $T^*$ time ago, are not utilized.
Future work should explore how to further incorporate such data to improve the enhanced algorithm.


\section*{Acknowledgements}

This work was supported by the U.S. National Institutes of Health grants R01AI177078, R01AI029168, and UM1AI068617.
Scientific computing at the Fred Hutch is supported by ORIP (S10OD028685). Additional support was provided by the Division of Intramural Research, National Institute of Allergy and Infectious Diseases, NIH.\vspace*{-8pt}

\section*{Data Availability}

Functions that implement the standard and enhanced estimators are available in the \texttt{XSRecency} package at \url{https://github.com/mbannick/XSRecency}. Simulations, figures, and tables can be reproduced using the simulation code here \url{https://github.com/mbannick/RITA-plus-sims}. The Public Use CEPHIA data from \cite{grebe2019post} that we use in our data analysis are available here \url{https://zenodo.org/record/4900634}.



\bibliographystyle{biom} \bibliography{bib}

\label{lastpage}

\end{document}


\appendix

\title{\setstretch{1.5}Supporting Information for: An Enhanced Cross-Sectional HIV Incidence Estimator that Incorporates Prior HIV Test Results}
\setstretch{1}
\author{Marlena Bannick, Deborah Donnell, Richard Hayes, Oliver Laeyendecker, Fei Gao}
\maketitle
\setcounter{tocdepth}{2}
{\sffamily\tableofcontents}
\doublespacing

\section{Characteristics of the Enhanced Algorithm}\label{app:characteristics}
The MDRI of PT-RITA is given by $\Omega_{T^*} = \int_0^{T^*}\phi(u)du$. Therefore, the MDRI of the new algorithm is:
\begin{align}
    \begin{split}
    \Omega_{T^*}^{PT} 
    &= \int_0^{T^*} \Pr(R_i^{PT}=1|U_i=u) du \\
      &= \int_0^{T^*}\Pr(R_i=1,L_i=0|U_i = u)du + \int_0^{T^*}\Pr(R_i=0, R_i^* = 1|U_i = u)du
    \end{split} \\
    \begin{split} \nonumber
     &= \int_0^{T^*}\Pr(R_i=1|U_i = u)du + \int_0^{T^*}\Pr(R_i=0, R_i^* = 1|U_i = u)du \\
     &= \int_0^{T^*}\phi(u)du + \int_0^{T^*}\{1-\phi(u)\}\Pr(T_i \leq T^*, \Delta_i=0, Q_i=1|U_i = u)du \quad \text{(Assumption \ref{as:cond-indep})}\\
      &= \int_0^{T^*}\phi(u)du + \int_0^{T^*}\{1-\phi(u)\}E\Big\{E(Q_i (1-\Delta_i)I(T_i \leq T^*)|U_i=u, Q_i, T_i)\Big|U_i=u\Big\} du \\
      &= \int_0^{T^*}\phi(u)du + \int_0^{T^*} \{1-\phi(u)\} E \Big\{Q_i I(T_i \leq T^*)E(1-\Delta_i|U_i=u, Q_i, T_i) \Big|U_i=u\Big\} du\\
      &= \int_0^{T^*}\phi(u)du +  \int_0^{T^*} \{1-\phi(u)\} E \Big\{Q_i I(T_i \leq T^*)I(u < T_i) \Big\} du \quad \text{(Assumption \ref{as:independence-testing})} \\
      &= \int_0^{T^*}\phi(u)du + E \Big\{Q_i I(T_i \leq T^*)\int_0^{T_i}\{1-\phi(u)\} du\Big\}
      \end{split} \\
      \begin{split}
      &= \Omega_{T^*} + E \Big\{Q_i I(T_i \leq T^*) \int_0^{T_i}\{1-\phi(u)\} du\Big\} 
      .\label{eq:mdri-2}
      \end{split}
\end{align}
Let $G$ be the distribution function of infection durations in the tested population, and let $\beta_{T^*}$ be the FRR of the original RITA. The FRR of the PT-RITA can then be expressed as
\begin{align*}
    \beta^{PT}_{T^*} &= \frac{\int_{T^*}^{\infty} \Pr(R_i^{PT} = 1|U_i=u) G(du)}{\int_{T^*}^{\infty} G(du)}.
\end{align*}
Focusing on the integrand in the numerator, and for $u > T^*$, we have 
\begin{align*}
\Pr(R_i^{PT} = 1|U_i=u) &= \Pr((R_i=1, L_i=0) \text{ or } R_i^* = 1|U_i=u) \\ 
    &= \Pr(R_i=1,L_i=0|U_i = u) + \Pr(R_i=0, R_i^* = 1|U_i = u) \\
    &= \Pr(R_i=1,L_i=0|U_i = u),
\end{align*}
since $\Pr (R_i^* = 1) = 0$ when $U_i > T^*$, i.e., someone infected $u > T^*$ units ago cannot test negative at time $T_i \leq T^*$. Using the independence of $R_i$ and $(Q_i, T_i)$ conditional on $U_i$ from Assumption \ref{as:cond-indep}, and the fact that $\Delta_i$ is deterministic in the presence of $T_i$ once we have conditioned on $U_i = u$, we have:
\begin{align*}
    \Pr(R_i=1,L_i=0|U_i = u) &= \Pr(R_i=1|U_i=u) - \Pr(R_i=1,L_i=1|U_i=u) \\
    &= \Pr(R_i=1|U_i=u) - \Pr(R_i=1|U_i=u) \Pr(T_i > T^*, \Delta_i=1, Q_i=1|U_i=u) \\
    &= \phi(u) \left\{1 - \Pr(T_i > T^*, \Delta_i=1, Q_i=1|U_i=u) \right\} \\
    &= \phi(u) \left\{1 - \Pr(T^* < T_i \leq u, \Delta_i=1, Q_i=1|U_i=u) \right\}.
\end{align*}
Therefore,
\begin{align*}
    \beta^{PT}_{T^*} = \frac{\int_{T^*}^{\infty} \phi(u) \big\{1 - \Pr(T^* < T_i \leq u, \Delta_i=1, Q_i=1|U_i=u) \big\} dG(u)}{\int_{T^*}^{\infty}dG(u)} \leq \frac{\int_{T^*}^{\infty} \phi(u) dG(u)}{\int_{T^*}^{\infty}dG(u)} \equiv \beta_{T^*}.
\end{align*}
\section{The Enhanced Estimator}

\subsection{Estimator Derivation}\label{app:estimator}

Let $p(s) = \Pr(I_i \leq s)$. In the main text, $s = t_{cs}$. Let incidence be
\begin{align*}
    \lambda(s) = \lim_{ds\to0}\frac{1}{ds} \Pr(s\le I_i<s+ds | I_i\ge s).
\end{align*}
For simplicity of notation we will write $\lambda(s) = \Pr(I_i=s|I_i\ge s)$. Then, the PT-RITA-recent probability for HIV-positive subject $i$ at time $s$ is given by
\begin{align*}
    P_{rec,i}^{PT}(s) &= \Pr(R_i^{PT}=1|I_i\le s)\\
    &=\int_{0}^{\infty} \Pr(R_i^{PT} = 1, I_i = s - u|I_i\le s) du \\
    &= \int_{0}^{\infty} \Pr(R_i^{PT} = 1|I_i = s - u)\Pr(I_i = s - u|I_i\le s) du \\
    &= \int_{0}^{\infty} \Pr(R_i^{PT} = 1 | U_i = u) \Pr(I_i = s - u|I_i\le s) du.
\end{align*}
Note that we can rewrite the second probability above by Assumption \ref{as:constant-inc}, giving it a short-hand notation to simplify the derivations:
\begin{align}
    P_i(s, u) := \Pr(I_i = s - u | I_i \leq s) &= \Pr(I_i = s| I_i \leq s)\nonumber \\
    &=
    \frac{\Pr(I_i = s | I_i \geq s) \Pr(I_i \geq s)}{\Pr(I_i \leq s)} \label{eq:incidence} \\
    &= \frac{\lambda(s)(1-p(s))}{p(s)}.\nonumber
\end{align}
Notice that $\Pr(R_i^{PT} = 1 | U_i = u)$ can be broken up into three terms based on the definition of $R_i^{PT}$, so we now have
\begin{align*}
    P_{rec,i}^{PT}(s) &= \int_{0}^{\infty} \Pr(R_i^{PT} = 1 | U_i = u) P_i(s, u) du \\
    &= \int_{0}^{\infty} \big\{\Pr(R_i = 1, L_i = 0 | U_i = u) + \Pr(R_i^* = 1 | U_i = u)\big\} P_i(s, u) du \\
    &= \underbrace{\int_{0}^{\infty} \Pr(R_i = 1 | U_i = u) P_i(s, u) du}_{(I)} - \underbrace{\int_{0}^{\infty} \Pr(R_i = 1, T_i \geq T^*, \Delta_i=1, Q_i=1 | U_i = u) P_i(s, u) du}_{(II)}\\
    &\qquad+\underbrace{\int_{0}^{\infty} \Pr(R_i = 0,T_i\le T^*,\Delta_i=0, Q_i=1 | U_i = u) P_i(s, u) du}_{(III)}.
\end{align*}
For term (I), we have
\begin{align}
    \int_{0}^{\infty} \Pr(R_i = 1 | U_i = u) P_i(s, u) du &= \int_0^{T^*} \Pr(R_i = 1 | U_i = u) P_i(s, u) du + \int_{T^*}^{\infty} \Pr(R_i = 1 | U_i = u) P_i(s, u) du \nonumber \\
    &= \int_0^{T^*} \phi(u) P_i(s, u) du + \beta_{T^*} \Big\{1-\int_{0}^{T^*} P_i(s, u) du\Big\} \nonumber \\
    &= \beta_{T^*} + \big\{\Omega_{T^*} - \beta_{T^*}T^*\big\} \frac{\lambda(s)(1-p(s))}{p(s)}\label{eq:prob-recent}.
\end{align}
Similarly, for term (II), we have,
\begin{align*}
    &\int_{0}^{\infty} \Pr(R_i = 1, T_i \geq T^*, \Delta_i=1, Q_i=1 | U_i = u) P_i(s, u) du\\
    &=\int_{0}^{\infty} \Pr(R_i = 1 | U_i = u)\Pr(T_i \geq T^*, \Delta_i=1, Q_i=1 | U_i = u) P_i(s, u) du\\
    &=\int_{0}^{\infty} \phi(u)\int_{T^*}^\infty\Pr(T_i=t, \Delta_i=1, Q_i=1 | U_i = u) dt P_i(s, u) du\\
    &=\int_{0}^{\infty} \phi(u)\int_{T^*}^\infty\Pr(\Delta_i=1 |T_i=t, Q_i=1,  U_i = u)\Pr(T_i=t,Q_i=1|U_i=u) dt P_i(s, u) du\\
    &=\int_{0}^{\infty} \phi(u)\int_{T^*}^\infty I(t\le u)\Pr(T_i=t,Q_i=1) dt P_i(s, u) du\\
    &=\int_{0}^{\infty} \phi(u)\Pr(T^*\leq T_i\le u,Q_i=1)P_i(s, u) du\\
    &=E\left\{I(T^*\leq T_i,Q_i=1)\int_{T_i}^{\infty} \phi(u)P_i(s, u) du\right\}\\
    &=E\left[I(T^*\leq T_i,Q_i=1)\beta_{T^*}\left\{1-\int_{0}^{T_i} P_i(s, u) du\right\}\right]\\
    &=E\left[I(T^*\leq T_i,Q_i=1)\beta_{T^*}\left\{1-T_i\frac{\lambda(s) (1-p(s))}{p(s)}\right\}\right].
\end{align*}
Finally, for term (III), we have
\begin{align*}
    \int_{0}^{\infty} \Pr(R_i = 0,T_i\le T^*,\Delta_i=0,Q_i=1 | U_i = u) P_i(s, u) du = E\left[Q_iI(T_i\le T^*)\frac{\lambda(s) (1-p(s))}{p(s)} \int_{0}^{T_i} \{1-\phi(u)\} du\right].
\end{align*}
Then,
\begin{align*}
    P_{rec,i}^{PT}(s) &= \Omega_{T^*}\frac{\lambda(s) (1-p(s))}{p(s)} + \beta_{T^*}\left\{1-T^*\frac{\lambda(s) (1-p(s))}{p(s)}\right\} - E\left[Q_iI(T_i\geq T^*)\beta_{T^*}\left\{1-T_i\frac{\lambda(s) (1-p(s))}{p(s)}\right\}\right]\\
    &\qquad+ E\left[Q_iI(T_i\le T^*)\int_{0}^{T_i} \{1-\phi(u)\} du\frac{\lambda(s) (1-p(s))}{p(s)}\right]\\
	&=\frac{\lambda(s) (1-p(s))}{p(s)}\left\{\Omega_{T^*}  + \E \big\{Q_iI(T_i\le T^*)\int_{0}^{T_i} \{1-\phi(u)\} du\big\} - \beta_{T^*}\big(T^* - \E\big\{Q_iI(T_i\geq T^*)T_i\big\}\big)\right\}\\
	&\qquad+\beta_{T^*}(1-\E\{Q_iI(T_i\geq T^*)\}).\label{prec}
\end{align*}
We can then formulate an incidence estimator for $\lambda(s)$ by solving for $\lambda(s)$,
\begin{equation}
	\lambda(s) = \Bigg(\frac{p(s)}{1-p(s)}\Bigg)\frac{\Big\{P^{PT}_{rec}(s) - \beta_{T^*}(1-\E\{Q_iI(T_i\geq  T^*)\}) \Big\}}{ \Big\{\Omega_{T^*}  + \E \big\{Q_iI(T_i\le T^*)\int_{0}^{T_i} \{1-\phi(u)\} du\big\} - \beta_{T^*}\big(T^* - \E\big\{Q_iI(T_i\geq T^*)T_i\big\}\big)  \Big\}},\label{eq:incidence-param}
\end{equation}
and plugging in estimators for all unknown quantities (and recalling that the sample data and prior data are independent from the data used to create estimators of $\Omega_{T^*}$, $\beta_{T^*}$, and $\phi$),
\begin{align*}
    \hat\lambda &= \frac{(N_{pos}/N) \Big\{N_{rec}^{PT}/N_{pos} - (1/N_{pos})\sum_{i:D_i=1}\hat{\beta}_{T^*}\left[1-I(T_i \geq  T^*)Q_i\right]\Big\}}{(N_{neg}/N)\left\{\hat{\Omega}_{T^*}  + \sum_{i:D_i=1}\left[Q_i \cdot I(T_i\le T^*)\int_{0}^{T_i} \{1-\hat{\phi}(u)\} du - \hat{\beta}_{T^*}\{T^* - Q_i \cdot I(T_i \geq T^*)T_i\}\right]/N_{pos}\right\}} \\
    &= \frac{N_{rec}^{PT} - \sum_{i:D_i=1}\hat{\beta}_{T^*}\left[1-I(T_i \geq T^*)Q_i\right]}{N_{neg}\left\{\hat{\Omega}_{T^*}  + \sum_{i:D_i=1}\left[Q_i \cdot I(T_i\le T^*)\int_{0}^{T_i} \{1-\hat{\phi}(u)\} du - \hat{\beta}_{T^*}\{T^* - Q_i \cdot I(T_i\geq T^*)T_i\}\right]/N_{pos}\right\}}. \\
\end{align*}

\subsection{Consistency of the Enhanced Estimator}

Under Assumptions \ref{as:constant-frr} through \ref{as:consistency}, we can show that $\hat{\lambda}$ is consistent for $\lambda(s)$, where $s$ is the calendar time of the cross-sectional sample. Noting the form of $\hat{\lambda}$, we see that almost each of the terms are sample statistics and $\hat{\lambda}$ is a plug-in estimator for $\lambda(s)$. Therefore, we would automatically have consistency of $\hat{\lambda}$ for $\lambda(s)$ by Slutsky's Lemma.

However, there is one term that is a more complicated, and that requires Assumption \ref{as:consistency}: $\sum_{i:D_i=1} [Q_i I(T_i \leq T^*) \int_0^{T_i} \{1 - \hat{\phi}(u)\}] du$. Consistency of this term is more complicated because of the plug-in estimate of $\hat{\phi}$, so this is not a simple average. Lemma \ref{lemma:consistency} allows us to conclude consistency of this term, which means that the entire $\hat{\lambda}$ is consistent.

\begin{lemma}\label{lemma:consistency}
    Let $\psi_n := \sum_{i:D_i=1} [Q_i I(T_i \leq T^*) \int_0^{T_i} \{1 - \hat{\phi}(u)\}] du$, and $\psi_0 = \E(Q_i 1(T_i \leq T^*) \int_0^{T_i} [1 - \phi(u) ] du$. Under Assumption \ref{as:consistency}, $\psi_n \to \psi_0$ in probability.
\end{lemma}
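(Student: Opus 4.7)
The plan is to handle the plug-in estimator $\hat\phi$ via an add-and-subtract argument, reducing the problem to (i) a uniform-convergence bound on $\hat\phi - \phi$ and (ii) a standard sample-average law of large numbers. I will treat $\psi_n$ as the natural sample-mean analogue of $\psi_0$ (i.e.\ divided by $N_{pos}$, consistent with its appearance in $\hat\lambda$).

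First, I would introduce the oracle statistic
\[
\psi_n^{\ast} := \frac{1}{N_{pos}}\sum_{i:D_i=1} Q_i\, I(T_i \leq T^*) \int_0^{T_i} \{1 - \phi(u)\}\, du,
\]
in which the true MDRI kernel $\phi$ replaces its estimator, and decompose $\psi_n - \psi_0 = (\psi_n - \psi_n^{\ast}) + (\psi_n^{\ast} - \psi_0)$. It then suffices to show that each term vanishes in probability.

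For the first term, I would bound
\[
|\psi_n - \psi_n^{\ast}| \;\leq\; \frac{1}{N_{pos}}\sum_{i:D_i=1} Q_i\, I(T_i \leq T^*) \int_0^{T_i} |\hat\phi(u) - \phi(u)|\, du \;\leq\; T^{\ast} \sup_{u \in [0, T^*]} |\hat\phi(u) - \phi(u)|,
\]
which is $o_p(1)$ by the uniform consistency of $\hat\phi$ on $[0, T^*]$ supplied by Assumption \ref{as:consistency}, together with the fact that the estimation sample used to construct $\hat\phi$ is independent of the cross-sectional sample. For the second term, the summand $Q_i\, I(T_i \leq T^*) \int_0^{T_i} \{1-\phi(u)\} du$ is deterministically bounded by $T^{\ast}$ (since $\phi \in [0,1]$ and the indicator forces $T_i \leq T^{\ast}$), hence has finite mean, so the weak law of large numbers applied within the HIV-positive subsample, combined with $N_{pos}/N \pa \Pr(D_i=1) > 0$ to handle the denominator, yields $\psi_n^{\ast} \pa \psi_0$. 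Slutsky then delivers $\psi_n \pa \psi_0$.

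The main obstacle is passing the convergence of $\hat\phi$ through the inner integral in a way that is simultaneously uniform across subjects: a pointwise statement about $\hat\phi(u)$ at a fixed $u$ is not enough, since $T_i$ varies across the sample. This is precisely what Assumption \ref{as:consistency} is designed to deliver; without uniform consistency on $[0, T^*]$ one would instead need a dominating envelope on $|\hat\phi - \phi|$ and a dominated-convergence argument, which would require additional regularity on the estimator of $\phi$.
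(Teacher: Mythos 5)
Your proof follows essentially the same route as the paper: the decomposition $\psi_n-\psi_0=(\psi_n-\psi_n^{\ast})+(\psi_n^{\ast}-\psi_0)$ is exactly the paper's $(P_n-P_0)f_0 + P_n(f_n-f_0)$ split, with the law of large numbers handling the oracle term and a deterministic bound handling the plug-in term. The one point to fix is the last step of your bound: the paper's Assumption on $\hat\phi$ is used in the form $\int_0^{\tau}|\hat\phi(u)-\phi(u)|\,du \to 0$ in probability (integrated, i.e.\ $L^1$, consistency over the support $[0,\tau]$ of $T_i$), not sup-norm consistency on $[0,T^*]$; passing to $T^{*}\sup_{u\in[0,T^*]}|\hat\phi(u)-\phi(u)|$ therefore invokes a stronger condition than is assumed. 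The repair is already contained in your own chain: stop at the intermediate inequality $|\psi_n-\psi_n^{\ast}|\le \frac{1}{N_{pos}}\sum_{i:D_i=1}\int_0^{T_i}|\hat\phi(u)-\phi(u)|\,du \le \int_0^{\tau}|\hat\phi(u)-\phi(u)|\,du$ and apply the assumption there, which also makes your closing remark about needing an envelope and dominated convergence unnecessary.
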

\begin{proof}
    Our expectations over $(Q_i, T_i)$ are only with respect to HIV positive individuals ($D_i = 1$). Since HIV positive individuals are a subset of a random cross-sectional sample, for the sake of simplicity, redefine $n$ as the number of HIV-positive individuals, and let $\psi_n := \frac1n \sum_{i=1}^{n} [Q_i I(T_i \leq T^*) \int_0^{T_i} \{1 - \hat{\phi}(u)\}] du$. Let $P_n(\cdot) = \frac{1}{n} \sum_{i=1}^{n}(\cdot)$ and $P_0(\cdot) = \int(\cdot) d\tilde{P}(Q_i, T_i)$, where $\tilde{P}(Q_i, T_i)$ is the probability measure for $(Q_i, T_i)$. Define the functions
    \begin{align*}
        f_n(Q_i, T_i) &:= Q_i 1(T_i \leq T^*) \int_0^{T_i} [1-\hat{\phi}(u)] du \\
        f_0(Q_i, T_i) &:= Q_i 1(T_i \leq T^*) \int_0^{T_i} [1-\phi(u)] du.
    \end{align*}
    We use the following expansion:
    \begin{align*}
        \psi_n - \psi_0 \equiv P_n f_n - P_0 f_0 = (P_n - P_0) f_0 + P_n (f_n - f_0).
    \end{align*}
    It suffices to show that both of the terms on the RHS above converge to 0 in probability. For $(P_n - P_0) f_0$, this is evident by the law of large numbers, since $f_0$ is a fixed function. For the second term, we have
    \begin{align*}
        |P_n (f_n - f_0)| &= \left|\frac1n \sum_{i=1}^{n} Q_i 1(T_i \leq T^*) \int_0^{T_i} [\hat{\phi}(u) - \phi(u)] \right| \\
        &\leq \frac1n \sum_{i=1}^{n} \int_0^{T_i} \left|\hat{\phi}(u) - \phi(u) \right| du \\
        &\leq \int_0^{\tau} \left|\hat{\phi}(u) - \phi(u) \right| du \\
       \text{(Assumption \ref{as:consistency})} \quad &\quad\quad \to 0 \text{ in probability}.
    \end{align*}
    Therefore, $\psi_n \to \psi_0$ in probability.
\end{proof}

\subsection{Shadow Period of the Enhanced Estimator}

In this section, we derive the mean shadow period \citep{kaplan1999snapshot} of the enhanced estimator, which indicates the time at which the estimate is consistent for the incidence, when the incidence is linearly changing in time.
Note that $\hat{\lambda} \to \lambda^*$ in probability, where $\lambda^*$ has the form in \eqref{eq:incidence-param}.
We wish to write the form of $\lambda^*$ when the true incidence function is
\begin{align*}
    \lambda(s-u) = \lambda(s) + \rho u
\end{align*}
where $s$ is a calendar time, and $u \in [0, \max(\tau, T^*)]$.
Here, Assumption \ref{as:constant-inc} is violated such that instead of equation \ref{eq:incidence} we have $P(s, u) = \lambda(s-u)(1-p(s))/p(s)$.
Using terms (I), (II), and (III) we have
\begin{align*}
    P_{rec}^{PT}(s) &:= \int_0^{T^*} \left\{\phi(u) - \beta_{T^*} \right\} P(s, u) du + \beta_{T^*}
     - \beta_{T^*} \E\left\{I(T_i \geq T^*, Q_i=1) \left(1 - \int_0^{T_i} P(s, u) du \right) \right\} \\
    &\quad + \E\left\{I(T_i \leq T^*, Q_i=1) \int_0^{T_i} \left(1 - \phi(u) \right) P(s, u) du \right\} \\
    &= \left(\frac{1-p(s)}{p(s)}\right)\int_0^{T^*} \left\{\phi(u) - \beta_{T^*} \right\} \lambda(s-u) du + \beta_{T^*} \\
    &\quad - \beta_{T^*} \E\left\{I(T_i > T^*, Q_i=1) \right\} + \beta_{T^*} \left(\frac{1-p(s)}{p(s)}\right) \E\left\{I(T_i \geq T^*, Q_i=1) \int_0^{T_i} \lambda(s-u) du \right\} \\
    &\quad + \left(\frac{1-p(s)}{p(s)}\right) \E\left\{I(T_i \leq T^*, Q_i=1) \int_0^{T_i} \left(1 - \phi(u) \right) \lambda(s-u) du \right\} \\
\end{align*}
We can use this to study $\lambda^*$, and derive its value when $\lambda(s-u) = \lambda(s) + \rho u$, by plugging in $P_{rec}^{PT}(s)$ into equation \eqref{eq:incidence-param}. For short-hand, let $A_i = Q_i I(T_i \leq T^*)$ and $B_i = Q_i I(T_i \geq T^*)$ be indicators for having an available recent or non-recent prior test. Then,
\begin{align*}
    \lambda^* &= \left(\frac{p(s)}{1-p(s)}\right)\frac{\left\{P_{rec}^{PT}(s) - \beta_{T^*}(1-\E\{B_i\}) \right\}}{\left\{\Omega_{T^*}  + \E \left\{A_i\int_{0}^{T_i} \{1-\phi(u)\} du\right\} - \beta_{T^*}\left(T^* - \E\left\{B_iT_i\right\}\right)  \right\}} \\
    &= \frac{\int_0^{T^*} \left\{\phi(u) - \beta_{T^*} \right\} \lambda(s - u) du + \E\left\{A_i \int_0^{T_i} \left(1 - \phi(u) \right) \lambda(s-u) du\right\} - \beta_{T^*} \E\left\{B_i \int_0^{T_i} \lambda(s-u) du \right\}}{\left(\Omega_{T^*} -\beta_{T^*} T^*\right) + \E \left\{A_i\int_{0}^{T_i} \{1-\phi(u)\} du\right\} + \beta_{T^*}\E\left\{B_iT_i\right\}}.
\end{align*}
Since each term of the numerator has $\lambda(s-u)$, we can linearly separate it into the numerator times $\lambda(s)$ outside of the integrals and the numerator times $\rho u$ inside the integrals. The integral of the numerator after pulling out $\lambda(s)$ cancels out with the denominator, so we are left with:
\begin{align*}
    \lambda^* &= \lambda(s) + \rho \frac{\left[\int_0^{T^*} \left\{\phi(u) - \beta_{T^*} \right\}u du \right] + \E\left\{A_i \int_0^{T_i} \left(1 - \phi(u) \right) u du\right\} - \beta_{T^*} \E\left\{B_i \int_0^{T_i} u du \right\}}{\left[\Omega_{T^*} -\beta_{T^*} T^*\right] + \left[\E \left\{A_i\int_{0}^{T_i} \{1-\phi(u)\} du\right\} + \beta_{T^*}\E\left\{B_iT_i\right\}\right]}
\end{align*}
We now simplify the second and third expression in the numerator of the term multiplied by $\rho$. Recall that $T_i$ is supported on $[0, \tau]$.
\begin{align*}
    \E\left\{B_i \int_0^{T_i} u du \right\}  &=\E\left\{Q_i I(T_i\geq T^*) \int_0^\tau I(u\le T_i) u du \right\}\\ 
     &=\int_0^\tau\E\left\{Q_i I(T_i \geq T^*)  I(u\le T_i)  \right\}u du
\end{align*}
where we have used Fubini's theorem to switch integral ordering. Similarly, we have
\begin{align*}
    \E\left\{A_i \int_0^{T_i} (1-\phi(u)) u du \right\}   &=\E\left\{Q_i I(T_i\le T^*) \int_0^\tau I(u\le T_i) (1-\phi(u)) u du \right\}\\
    &=\int_0^\tau\E\left\{Q_i I(T_i\le T^*)  I(u\le T_i)  \right\}(1-\phi(u)) u du\\
\end{align*}
Putting these terms together, we have
\begin{align*}
    \E \left\{A_i\int_{0}^{T_i} \{1-\phi(u)\} du\right\} + \beta_{T^*}\E\left\{B_iT_i\right\}
    &= \int_{0}^{\tau} u \E\left\{Q_i 1(u \leq T_i) \left[1(T_i \leq T^*) \left(1-\phi(u) \right) + 1(T_i \geq T^*) \beta_{T^*} \right] \right\} du.
\end{align*}
Therefore, $\lambda^* = \lambda(s) + \rho \omega^*$, where
\begin{align*}
    \omega^* := \frac{\left[\int_0^{T^*} \left\{\phi(u) - \beta_{T^*} \right\}u du \right] + \left[\int_{0}^{\tau} u \E\left\{Q_i 1(u \leq T_i) \left[1(T_i \leq T^*) \left(1-\phi(u) \right) + 1(T_i \geq T^*) \beta_{T^*} \right] \right\} du \right]}{\left[\Omega_{T^*} -\beta_{T^*} T^*\right] + \left[\E \left\{Q_i I(T_i \leq T^*) \int_{0}^{T_i} \{1-\phi(u)\} du\right\} + \beta_{T^*}\E\left\{Q_i I(T_i \geq T^*) T_i\right\}\right]}.
\end{align*}
The first term in the numerator, and the first term in the denominator, match that of the shadow period for the adjusted estimator \citep{gao2021statistical}. However, both the numerator and denominator have extra terms that depend on the distribution of prior testing data $(Q_i, T_i)$. The second term in the numerator is a weighted average of $u$'s over the support of $T_i \in [0, \tau]$, where the weights for $u$ are functions of the distribution of prior testing data. Similarly to the derivation in \citet{gao2021statistical}, $\hat{\lambda}$ is consistent for $\lambda(s - \omega^*)$, where $\omega^*$ is given above.
\section{Analytical Variance Derivation}\label{app:variance}
	
	Let $N$ be the total sample size, $N_{pos} \sim \text{Binomial}(N, p)$, be the number of people positive for HIV, and $N_{rec}^{PT} \sim \text{Binomial}(N_{pos}, P_{rec}^{PT})$ be the number of recents identified by the enhanced algorithm. We will also use $P_{rec}$, which is the probability of an HIV positive person being identified as recent based on the recency test alone that was used in the derivations of \cite{gao2021statistical}.
	
	Recall that $Q_i$ is an indicator for having a prior test result available. Let $q \equiv P(Q_i=1)$, thus we can think of $Q_i \sim \text{Binomial}(q)$. Let $c = P(T_i \leq T^* | Q_i=1)$, with $C_i := 1(T_i \leq T^* | Q_i = 1) \sim \text{Binomial}(c)$. Thus, $C_i$ and $Q_i$ are independent binomial random variables. Let $A_i = Q_i 1(T_i \leq T^*)$ and $B_i = Q_i 1(T_i \geq T^*)$, indicators for having an available recent or non-recent prior test.
 Let $N_{A} = \sum_{i:D_i=1} A_i$, and $N_{B} = \sum_{i:D_i=1} B_i$. Then based on the relationship with $C_i$ and $Q_i$, we have $N_{A} \sim \text{Binomial}(N_{pos}, cq)$ and $N_B \sim \text{Binomial}(N_{pos}, (1-c)q)$. For short-hand, we will call $P_A = cq$ and $P_B = (1-c)q$. With this new notation, we can write our incidence estimator from \eqref{est} as
	\begin{align*}
		\hat{\lambda} = \frac{W_1}{(N-W_2)(W_3 + [W_4 + W_5]/W_2)}
	\end{align*}
\begin{itemize}[label=--]
		\item $W_1 = N_{rec}^{PT} - \hat{\beta}_{T^*} \sum_{i:D_i=1} (1-B_i) = N_{rec}^{PT} - \hat{\beta}_{T^*} (N_{pos} - N_B)$
		\item $W_2 = N_{pos}$
		\item $W_3 = \hat{\Omega}_{T^*} - \hat{\beta}_{T^*} T^*$
		\item $W_4 = \sum_{i:D_i=1} A_i \big\{T_i-\int_0^{T_i} \hat{\phi}(u) du\big\}$
		\item $W_5 = \sum_{i:D_i=1} \hat{\beta}_{T^*} B_i T_i$
	\end{itemize}
\newpage
\subsection{Plug-In Terms}
		In the variance derivation, we will use several terms that simplify the resulting expressions for the variance and allow us to plug in empirical expectations and variances to estimate the overall variance of $\hat{\lambda}$:
	 \begin{itemize}
	 	\item We have $\phi(u) \equiv \E[\hat{\phi}(u)]$ and let $\rho(u, v) := \Cov[\hat{\phi}(u), \hat{\phi}(v)]$. This is the covariance of the estimated test-recent function $\hat{\phi}$ at different function evaluations $u, v > 0$.
	 	\item Define $\Omega_{t} := \int_0^{t} \phi(u) du$. Let $\omega_{T,A} := \E[\Omega_{T_i}|A_i=1]$, $\sigma^2_{\omega_{T,A}} := \Var[\Omega_{T_i}|A_i=1]$, and $\omega^*_{T,A} = \E[T_i \Omega_{T_i}|A_i=1]$. These are all quantities related to the mean duration of recent infection, but where the $T^*$ in the integration is the random variable $T_i$.
	 	\item Let $r_{t_i, t_j} = \int_0^{t_i} \int_0^{t_j} \rho(u, v) du dv$. Let $r_{T,A} = \E[r_{T_i, T_i}|A_i=1]$ and $r'_{T,A} = \E[r_{T_i, T_j}|A_i=1]$, and let $r^*_{T,A} = \E[r_{T_i,T^*}|A_i=1]$. These are the integrated covariance functions where the 2-dimensional grid of integration is a random variable \textit{among those with a recent prior test}.
	 	\item Let $\mu_{T,A} = \E(T_i|A_i=1)$ and $\sigma^2_{T,A} = \Var(T_i|A_i=1)$, the expected time (and its variance) of a prior test if it was less than $T^*$ ago. Let $\mu_{T,B} = \E(T_i|B_i=1)$ and $\sigma^2_{T,B} = \Var(T_i|B_i=1)$, be the same quantities, but for those whose prior test was more than $T^*$ ago.
	 \end{itemize}

\newpage
\subsection{Recency Algorithm in Indicator Form}
$R_i^{PT}$ is our recency indicator for the PT-RITA. When we compute the variance of $W_1$ (section \ref{var:w1}), and the covariance of $W_1$ and $W_5$ (section \ref{cov:15}), we will need to work with the components of $N_{rec}^{PT} = \sum_{i:D_i=1} R_i^{PT}$. Here we derive an equivalent form with the new notation of $(A_i, B_i)$:
\begin{align}
    R_i^{PT} &= R_i - L_i + R_i^* \nonumber\\
            &= R_i(1 - I(T_i \geq  T^*)\Delta_i Q_i) + (1-R_i) I(T_i \leq T^*) (1 - \Delta_i) Q_i \\
                            &= R_i(1 - B_i\Delta_i ) + (1 - R_i) A_i (1-\Delta_i).\label{mi}
\end{align}
$R_i$ is independent of $A_i$ and $B_i$ by Assumption \ref{as:independence-testing}; however, $\Delta_i$ is dependent with $R_i$, $A_i$ and $B_i$.

\subsection{Covariance Identities}
\begin{itemize}
	\item For any random variables $X, Y, Z$ where $X$ is independent of $Y$ and $Z$, we have the following covariance identity: $\Cov(XY, XZ) = \Var(X) \Cov(Y, Z) + \E(X)^2 \Cov(Y, Z) + \E(Y) \E(Z) \Var(X)$.
	\item For any two random variables $X_i, Y_i$, and i.i.d. data, we have that by the law of total covariance,
\begin{align*}
	\Cov\left(\sum_{i:D_i=1} X_i, \sum_{i:D_i=1} Y_i \right) &= \Cov(X_i, Y_i)\E\left(N_{pos}  \right) + \Var(N_{pos}) \E(X_i) \E(Y_i) \\
	&= N p\Big\{\E(X_iY_i) - p\E(X_i)\E(Y_i) \Big\}.
\end{align*}
Similarly, $\Var\left(\sum_{i:D_i=1}X_i\right) = N p \Var(X_i) + N p (1-p) \E(X_i)^2$.
	\item For any random variables $X, Y, Z$ if $X \indep Y$, then $\Cov(Z, XY) = \E(X) \Cov(Z, Y)$.
\end{itemize}

	 \newpage
	 \subsection{Calculation of Variance and Covariance Terms}
\begingroup
    \fontsize{10pt}{12pt}\selectfont
    We show the derivations for complicated terms that are not already included in \cite{gao2020sample}:
	 \begin{align*}
	 \E(W_1) &= N p \cdot \big\{P_{rec}^{PT} - \beta_{T^*} (1-P_B)\big\} \\
	 \E(W_2) &= Np\\
	 \E(W_3) &= \Omega_{T^*} - \beta_{T^*} T^* \\
	 \E(W_4) &= N p \cdot P_A \cdot (\mu_{T,A} - \omega_{T,A})\\
	 \E(W_5) &= N p \cdot \beta_{T^*} \cdot \E(B_i T_i) = N p \cdot \beta_{T^*} P_B \cdot \mu_{T,B} \\
	 \hyperref[var:w1]{\Var(W_1)} &= N p \Big\{P_{rec}^{PT}(1-P_{rec}^{PT}) + (1-p) (P_{rec}^{PT} - (1-P_B) \beta_{T^*})^2 \\
	&\quad\quad\quad + \sigma^2_{\hat{\beta}_{T^*}} (1-P_B)\Big\{1-(1-P_B)p + N p (1-P_B) \Big\} \\
	&\quad\quad\quad + \beta_{T^*}P_B\big\{\beta_{T^*}(1-P_B) - 2 \left(P_{rec}^{PT} - (P_{rec} - \beta_{T^*}) \left\{ 1 + \mu_{T,B}\beta_{T^*}/(\Omega_{T^*} - \beta_{T^*} T^*) \right\}\right) \big\} \Big\} \\	 \Var(W_2) &= Np(1-p)\\
	 \Var(W_3) &= \sigma^2_{\hat{\Omega}_{T^*}} + \sigma^2_{\hat{\beta}_{T^*}} T^{*2}\\
	 \hyperref[var:w4]{\Var(W_4)} &= N p P_A \big \{(\sigma^2_{T,A} + \sigma^2_{\omega_{T,A}}) + (\mu^2_{T,A} + \omega^2_{T,A}) + (1-p P_A) (\mu_{T,A} - \omega_{T,A})^2 \\
&\quad\quad\quad\quad\quad + r_{T,A} + P_Ar'_{T,A} (Np - p) - 2\omega_{T,A}^* \big\} \\
	 \hyperref[var:w5]{\Var(W_5)} &= N p P_B \Big\{\sigma^2_{\hat{\beta}_{T^*}} \mu^2_{T,B}P_B N p + \left(\sigma^2_{\hat{\beta}_{T^*}} + \beta^2_{T^*} \right) \left(\sigma^2_{T,B} + \mu_{T,B}^2(1-P_Bp) \right)\Big\}\\
	 \hyperref[cov:12]{\Cov(W_1, W_2)} &= N p (1-p) \big\{P_{rec}^{PT} -(1-P_B) \beta_{T^*} \big\} \\
	 \hyperref[cov:13]{\Cov(W_1, W_3)} &= N p \cdot T^* \sigma^2_{\hat{\beta}_{T^*}}  (1-P_B) \\
	 \hyperref[cov:14]{\Cov(W_1, W_4)} &= Np P_A \bigg\{(\mu_{T,A} - \omega_{T,A}) \big(P_{rec} - pP_{rec}^{PT} - \beta_{T^*}(1-p+pP_B) \big) \\
	&\quad\quad\quad\quad\quad + \frac{(P_{rec} - \beta_{T^*})}{(\Omega_{T^*} - \beta_{T^*} T^*)} \big(\mu_{T,A}^2 + \sigma_{T,A}^2 + \omega_{T,A}^2 + \sigma^2_{\omega_{T,A}} - 2\omega_{T,A}^*\big) \bigg\} \\
	 \hyperref[cov:15]{\Cov(W_1, W_5)} &= N p P_B \Big\{\beta_{T^*} \Big((P_{rec} - \beta_{T^*}) \Big(\mu_{T,B} + (\Omega_{T^*} - \beta_{T^*} T^*)(\sigma_{T,B}^2 + \mu^2_{T,B}) \beta_{T^*}\Big) - pP_{rec}^{PT} \mu_{T,B} \Big) \\
	&\quad\quad\quad\quad + p \Big(\sigma^2_{\hat{\beta}_{T^*}} + \beta_{T^*}^2 \Big) \mu_{T,B} (1-P_B) - \sigma^2_{\hat{\beta}_{T^*}} N p (1-P_B) \mu_{T,B}\Big\} \\
	 \hyperref[cov:24]{\Cov(W_2, W_4)} &= N p (1-p) \cdot P_A \cdot (\mu_{T,A} - \omega_{T,A})  \\
	 \hyperref[cov:25]{\Cov(W_2, W_5)} &= N p (1-p) \cdot \beta_{T^*} \cdot P_B \mu_{T,B} \\
	 \hyperref[cov:34]{\Cov(W_3, W_4)} &= -N p \cdot P_A \cdot r_{T,A}^* \\
	 \hyperref[cov:35]{\Cov(W_3, W_5)} &= - N p \cdot T^* \cdot P_B \mu_{T,B} \sigma^2_{\hat{\beta}_{T^*}}\\
	 \hyperref[cov:45]{\Cov(W_4, W_5)} &= -N p^2 \beta_{T^*} P_B P_A \mu_{T,B} (\mu_{T,A}-\omega_{T,A})
 \end{align*}  
\endgroup
	 
\subsubsection{Variance of $W_1$}\label{var:w1}
\begin{align*}
	\Var(W_1)
	& = \underbrace{\Var(N_{rec}^{PT})}_{(I)} + \underbrace{\Var\left(\hat{\beta}_{T^*}\sum_{i:D_i=1} (1-B_i)\right)}_{(II)} - 2\underbrace{\Cov\left(N_{rec}^{PT}, \hat{\beta}_{T^*} \sum_{i:D_i=1} (1-B_i)\right)}_{(III)}.
\end{align*}
The variance of a nested binomial can be used to compute (I) and one part of (II).
\begin{align*}
	(I) &= (P^*_R)^2 Np(1-p) + Np P_{rec}^{PT} (1-P_{rec}^{PT}) = N p \cdot P_{rec}^{PT} (P_{rec}^{PT}(1-p) + (1-P_{rec}^{PT})) = N pP_{rec}^{PT} (1-pP_{rec}^{PT}) \\
	\Var\left(N_{pos}-N_B \right) &= (1-P_B)^2 Np(1-p) + Np P_B (1 - P_B) = Np(1-P_B)\{1-p(1-P_B)\}.
\end{align*}
Then we can compute the variance of the product in term (II) 
\begin{align*}
	(II) &= \Var(\hat{\beta}_{T^*}) \Var\left(N_{pos}-N_B \right) + \Var(\hat{\beta}_{T^*}) \E(N_{pos}-N_B)^2 + \E(\hat{\beta}_{T^*})^2 \Var(N_{pos}-N_B) \\
	&= \Big\{ \sigma^2_{\hat{\beta}_{T^*}} + \beta^2_{T^*}\Big\}\Big\{(1-P_B)^2 Np(1-p) + Np P_B (1 - P_B) \Big\} + \sigma^2_{\hat{\beta}_{T^*}}  (Np (1-P_B))^2 \\
	&= Np (1-P_B) \Big\{\Big(\sigma^2_{\hat{\beta}_{T^*}} + \beta^2_{T^*} \Big) \Big((1-P_B) (1-p) + P_B \Big) + \sigma^2_{\hat{\beta}_{T^*}} N p (1-P_B)\Big\}.
	\end{align*}
For the covariance term, we use the law of total covariance, and rewrite the conditional covariance term in terms of the recency indicators.
\begin{align*}
	\Cov(N_{rec}^{PT}, \hat{\beta}_{T^*} (N_{pos}-N_B)) &= \beta_{T^*}\Cov\left(\sum_{i:D_i=1} R_i^{PT}, \sum_{i:D_i=1} (1-B_i) \right)\\
	&= \beta_{T^*} N p \left(\E(R_i^{PT}(1-B_i)) - p \E(R_i^{PT}) \E(1-B_i) \right)\\
	&= \beta_{T^*} N p \left(P_{rec}^{PT} - \E(R_i^{PT}B_i) - p P_{rec}^{PT} (1-P_B)\right).
\end{align*}
Recalling the expression for $R_i^{PT}$ in \eqref{mi}, we derive the product of $R_i^{PT}$ and $B_i$:
\begin{align*}
	R_i^{PT}B_i &= \big\{R_i (1-(1-C_i)Q_i \Delta_i) + (1-R_i)C_iQ_i(1-\Delta_i)\big\} (1-C_i)Q_i \\
	&= R_i(1-C_i)Q_i - R_i(1-C_i)Q_i \Delta_i \\
	&= R_i(1-C_i)(1-\Delta_i)Q_i.
\end{align*}
In order to calculate $\E(R_i^{PT}B_i)$, we have a similar derivation to term (III) of the incidence estimator:
\begin{align*}
	\E(R_i^{PT} B_i) &= \E(R_i(1-C_i)Q_i(1-\Delta_i)) \\
	&= \E \left(\int_0^{\infty} \Pr(R_i=1,T_i \geq T^*, \Delta_i=0, Q_i=1|U_i=u) P_i(s,u) du \right) \\
	&= \E \left(Q_i1(T_i \geq T^*)  \frac{\lambda(1-p)}{p} \left\{\int_{0}^{T^*} \Pr(R_i=1|U_i=u) du + \int_{T^*}^{T_i} \Pr(R_i=1|U_i=u) du\right\} \right) \\
	&= \frac{\lambda(1-p)}{p} \E \left(Q_i1(T_i \geq T^*)  \left\{\Omega_{T^*} + \beta_{T^*} (T_i - T^*)\right\} \right) \\
	&= \frac{\lambda(1-p)}{p} \left\{\E (Q_i1(T_i \geq T^*))(\Omega_{T^*} - \beta_{T^*} T^*) + \E(Q_i1(T_i > T^*) T_i) \beta_{T^*} \right\} \\
	&= \frac{\lambda(1-p)}{p} P_B\left\{ (\Omega_{T^*} - \beta_{T^*} T^*) + \mu_{T,B}\beta_{T^*} \right\}.
\end{align*}
Using the relationship that $P_{rec} = \big\{\beta_{T^*} + (\Omega_{T^*} - \beta_{T^*} T^*) \lambda(1-p)/p \big\}$, where $P_{rec}$ is the probability of testing recent on the recency assay, we can rewrite $\lambda(1-p)/p = (P_{rec} - \beta_{T^*})/(\Omega_{T^*} - \beta_{T^*} T^*)$.
Therefore,
\begin{align}
	(III) &= \beta_{T^*} N p \left(P_{rec}^{PT} - \frac{\lambda(1-p)}{p} P_B\left\{ (\Omega_{T^*} - \beta_{T^*} T^*) + \mu_{T,B}\beta_{T^*} \right\} - p P_{rec}^{PT} (1-P_B)\right) \\
	&= \beta_{T^*} N p \left(P_{rec}^{PT} - (P_{rec} - \beta_{T^*}) P_B\left\{ 1 + \mu_{T,B}\beta_{T^*}/(\Omega_{T^*} - \beta_{T^*} T^*) \right\} - p P_{rec}^{PT} (1-P_B)\right)
\end{align}
Putting each of the pieces (I), (II), and (III) together, we have
\begin{align*}
	\Var(W_1) &= N p \Big\{(P_{rec}^{PT})^2(1-p) + P_{rec}^{PT}(1-P_{rec}^{PT}) \\
	&\quad\quad\quad\quad +(1-P_B) \Big\{\Big(\sigma^2_{\hat{\beta}_{T^*}} + \beta^2_{T^*} \Big) \Big((1-P_B) (1-p) + P_B \Big) + \sigma^2_{\hat{\beta}_{T^*}} N p (1-P_B)\Big\} \\
	&\quad\quad\quad\quad -2\beta_{T^*} \cdot \Big\{P_B\left(P_{rec}^{PT} - \big\{\Omega_{T^*} + \beta_{T^*} (\mu_{T,B} - T^*)\big\}  \lambda (1-p)/p\right) + P_{rec}^{PT}(1-P_B) (1-p) \Big\}\Big\} \\
	&= N p \Big\{P_{rec}^{PT}(1-P_{rec}^{PT}) + (1-p) (P_{rec}^{PT} - (1-P_B) \beta_{T^*})^2 \\
	&\quad\quad\quad + \sigma^2_{\hat{\beta}_{T^*}} (1-P_B)\Big\{1-(1-P_B)p + N p (1-P_B) \Big\} \\
	&\quad\quad\quad + \beta_{T^*}P_B\big\{\beta_{T^*}(1-P_B) - 2 \left(P_{rec}^{PT} - (P_{rec} - \beta_{T^*}) \left\{ 1 + \mu_{T,B}\beta_{T^*}/(\Omega_{T^*} - \beta_{T^*} T^*) \right\}\right) \big\} \Big\} .
\end{align*}
This is very similar to the variance of this term in \cite{gao2020sample}, however, there is an extra term at the end led by $\beta_{T^*} P_B$ that comes from the additional variability in $N_{B}$, and the covariance between $N_{B}$ and $N_{R}^*$. Fixing everything else, and only varying the time of non-recent prior tests $B_i$, the variance of $W_1$ increases with $\delta_{T,B}$, i.e., as we use \textit{older} prior tests. In other words, our results are less precise when the prior test results are from a long time ago.\newpage
 
 \newpage
 \subsubsection{Variance of $W_4$}\label{var:w4}
 \begin{align*}
 	\Var(W_4) &= \underbrace{\Var\left(\sum_{i:D_i=1} A_i T_i \right)}_{(I)} + \underbrace{\Var\left(\sum_{i:D_i=1} A_i \int_0^{T_i} \hat{\phi}(u) du\right)}_{(II)} - 2\underbrace{\Cov\left(\sum_{i:D_i=1} A_i T_i, \sum_{i:D_i=1} A_i \int_0^{T_i} \hat{\phi}(u) du\right)}_{(III)}.
 \end{align*}
 For term $(I)$, we have 
 \begin{align*}
 	\Var\left(\sum_{i:D_i=1} A_i T_i \right) &= N p \Big\{\Var(A_i T_i) + (1-p) \E(A_iT_i)^2 \Big\} \\
 	&= N p \Big\{\Var(A_i\E(T_i|A_i)) + \E(A_i\Var(T_i|A_i)) + (1-p) P_A^2 \mu_{T,A}^2 \Big\} \\
 	&= N p \Big\{P_A(1-P_A) \mu_{T,A}^2 + P_A \sigma^2_{T,A} + (1-p) P_A^2 \mu_{T,A}^2 \Big\} \\
 	&= N p \Big\{P_A (\mu_{T,A}^2 + \sigma_{T,A}^2) - pP_A^2 \mu_{T,A}^2 \Big\} \\
 	&= N p P_A \Big\{\sigma^2_{T,A} + \mu_{T,A}^2(1-pP_A)\Big\}.
 \end{align*}
 
 Define $A_1^{N_{pos}} = \{A_i, i = 1, ..., n: D_i=1\}$ and $T_1^{N_{pos}} = \{T_i, i = 1, ..., n: D_i=1\}$. Then,
 \begin{align*}
	(II) &= \Var\left( \sum_{i:D_i=1} A_i \int_0^{T_i} \hat{\phi}(u) du \right)\\
	&= \underbrace{\E\left( \Var \sum_{i:D_i=1} A_i \int_0^{T_i} \hat{\phi}(u) du \Bigg| N_{pos},A_1^{N_{pos}},T_1^{N_{pos}}\right)}_{(\star)} + \underbrace{\Var \left(\E \sum_{i:D_i=1} A_i \int_0^{T_i} \hat{\phi}(u) du \Bigg| N_{pos},A_1^{N_{pos}},T_1^{N_{pos}}\right)}_{(\star\star)}.
\end{align*}
We focus on each of these terms separately. First, fixing $N_{pos} = n$, and all of the $(a_i, t_i)$, we have
\begin{align*}
	\Var \left(\sum_{i=1}^{n} a_i \int_0^{t_i} \hat{\phi}(u) du\right) &= \sum_{i=1}^{n} \sum_{j=1}^{n} a_i a_j\Cov\Big\{\int_0^{t_i} \hat{\phi}(u) du , \int_0^{t_j} \hat{\phi}(u) du \Big\} \\
	&= \sum_{i=1}^{n} \sum_{j=1}^{n}a_i a_j \Bigg\{\E \Big\{\int_0^{t_i} \int_0^{t_j} \hat{\phi}(u) \hat{\phi}(v) du dv \Big\} - \E\Big\{\int_0^{t_i} \hat{\phi}(u) du \Big\}\E\Big\{\int_0^{t_j} \hat{\phi}(u) du \Big\} \Bigg\}\\
	 	&= \sum_{i=1}^{n} \sum_{j=1}^{n}a_i a_j \Bigg\{\int_0^{t_i} \int_0^{t_j} \E[\hat{\phi}(u) \hat{\phi}(v)] du dv -\int_0^{t_i} \int_0^{t_j} \E[\hat{\phi}(u)] \E[\hat{\phi}(u)] du dv\Bigg\} \\
	 	&= \sum_{i=1}^{n} \sum_{j=1}^{n}a_i a_j \Bigg\{\int_0^{t_i} \int_0^{t_j} \Cov[\hat{\phi}(u),\hat{\phi}(v)]du dv\Bigg\} \\
	 	&= \sum_{i=1}^{n} \sum_{j=1}^{n}a_i a_j\int_0^{t_i} \int_0^{t_j} \rho(u, v) du dv.
\end{align*}
Second, fixing the same quantities, we have
\begin{align*}
	\E \left(\sum_{i=1}^{n} a_i \int_0^{t_i} \hat{\phi}(u) du\right) &= \sum_{i=1}^{n} a_i \int_0^{t_i} \phi(u) du.
\end{align*}
Now we take the expectation and variance over $(N_{pos},A_1^{N_{pos}},T_1^{N_{pos}})$:
\begin{align*}
	(\star) &= \E \left(\sum_{i:D_i=1} \sum_{j=1}^{N_{pos}}  A_i A_j \int_0^{t_i} \int_0^{t_j} \rho(u, v) du dv\right)  \\
	&= \E \left(\sum_{i:D_i=1} A_i r_{T_i,T_i}\right) + 2\E \left(\sum_{i\neq j} A_i A_j r_{T_i, T_j}\right) \\
	&= N p P_A r_{T,A} + 2 P_A^2r'_{T,A} \E {N_{pos}\choose 2}\\
	&= N p P_A r_{T,A} +  P_A^2r'_{T,A} \E \{N_{pos}^2 - N_{pos}\} \\
	&= N p  P_A \big\{r_{T,A}  + P_A r'_{T,A} (Np - p)\big\}.
\end{align*}
For the $(\star\star)$ term, we use the same result as for term $(I)$:
\begin{align*}
	(\star\star) &= \Var \left(\sum_{i:D_i=1} A_i \int_0^{T_i} \phi(u) du\right) \\
	&= N p P_A \Big\{\omega_{T,A}^2 (1-pP_A) + \sigma^2_{\omega_{T,A}} \Big\} 
\end{align*}
Thus, $(II) = (\star) + (\star\star) = N p P_A \Big\{r_{T,A}  + P_A r'_{T,A} (Np - p) + \omega_{T,A}^2 (1-pP_A) + \sigma^2_{\omega_{T,A}}\Big\}$. Finally, for term $(III)$, we first note that since $\hat{\phi}$ only appears in the covariance of the second term, we have the following equality:
\begin{align*}
	\Cov\left(\sum_{i:D_i=1} A_i T_i, \sum_{i:D_i=1} A_i \int_0^{T_i} \hat{\phi}(u) du\right) &= \Cov\left(\sum_{i:D_i=1} A_i T_i, \sum_{i:D_i=1} A_i \int_0^{T_i} \phi(u) du\right) \\
	&= Np \left(\E\left(A_i T_i \int_0^{T_i} \phi(u) du\right) - p \E(A_iT_i) \E\left(A_i \int_0^{T_i} \phi(u) du \right)\right) \\
	&= Np P_A \left(\omega_{T,A}^* - p P_A \mu_{T,A} \omega_{T,A} \right).
\end{align*}
Putting all three terms together, we have
\begin{align*}
(I) + (II) - 2*(III) &=  N p P_A \Big\{\sigma^2_{T,A} + \mu_{T,A}^2(1-pP_A)\Big\} \\
&\quad\quad + N p P_A \Big\{r_{T,A}  + P_A r'_{T,A} (Np - p) + \omega_{T,A}^2 (1-pP_A) + \sigma^2_{\omega_{T,A}}\Big\} \\
&\quad\quad - 2 Np P_A \big\{\omega_{T,A}^* - p P_A \mu_{T,A} \omega_{T,A} \big\} \\
&= N p P_A \big \{(\sigma^2_{T,A} + \sigma^2_{\omega_{T,A}}) + (\mu^2_{T,A} + \omega^2_{T,A}) - p P_A (\mu_{T,A} - \omega_{T,A})^2 \\
&\quad\quad\quad\quad\quad + r_{T,A} + P_Ar'_{T,A} (Np - p) - 2\omega_{T,A}^* \big\}.
\end{align*}
 
 \newpage
 \subsubsection{Variance of $W_5$}\label{var:w5}
 \begin{align*}
 	\Var(W_5) &= \Var\left(\hat{\beta}_{T^*} \sum_{i:D_i=1}  B_i T_i\right) \\
 	&= \sigma^2_{\beta_{T^*}} \left\{\E \left(\sum_{i:D_i=1} B_i T_i\right)^2 + \Var \left(\sum_{i:D_i=1} B_i T_i \right)\right\} + \beta_{T^*}^2 \Var\left(\sum_{i:D_i=1} B_i T_i\right).
 \end{align*}
For the variance term below, note that we can use the same equality as term (I) in $\Var(W_4)$, just replacing $P_A$ with $P_B$, $\mu_{T,A}$ with $\mu_{T,B}$, and $\sigma^2_{T,A}$ with $\sigma^2_{T,B}$.
\begin{align*}
	\E \left( \sum_{i:D_i=1} B_i T_i\right) &= Np P_B \mu_{T,B}.\\
	\Var\left(\sum_{i:D_i=1} B_i T_i\right) &= N p P_B \Big\{\mu_{T,B}^2 (1-pP_B) + \sigma^2_{{T,B}} \Big\}.
\end{align*}
Using these in the full expression for $\Var(W_5)$:
\begin{align*}
	\Var(W_5) &= N p P_B \Big\{\sigma^2_{\beta_{T^*}} \mu^2_{T,B}P_B N p + \left(\sigma^2_{\beta_{T^*}} + \beta^2_{T^*} \right) \left(\sigma^2_{T,B} + \mu_{T,B}^2(1-P_Bp) \right)\Big\}.
\end{align*}
\newpage
\subsubsection{Covariance of $W_1, W_2$}\label{cov:12}

\begin{align*}
	\Cov(W_1, W_2) &= \Cov(N_{rec}^{PT}, N_{pos}) - \Cov\left(N_{pos}, \hat{\beta}_{T^*}\sum_{i:D_i=1} (1-B_i) \right).
\end{align*}
For the first and second terms, we have
\begin{align*}
	\Cov(N_{rec}^{PT}, N_{pos}) &= \E(N_{rec}^{PT} N_{pos}) - \E(N_{rec}^{PT}) \E(N_{pos}) \\
	&= \E (N_{pos} \E(N_{rec}^{PT}|N_{pos})) - (N p)^2 P_{rec}^{PT} \\
	&= \E(N_{pos}^2 P_{rec}^{PT}) - \E(N_{pos})^2 P_{rec}^{PT} \\
	&= N p (1-p) P_{rec}^{PT} \\
	\Cov\left(N_{pos}, \hat{\beta}_{T^*}\sum_{i:D_i=1} (1-B_i) \right) &= \beta_{T^*} \E \left(N_{pos} \sum_{i:D_i=1}(1-B_i) \right) - \beta_{T^*} \E (N_{pos}) \E\left(\sum_{i:D_i=1} (1-B_i)\right) \\
	&= \beta_{T^*} (1-P_B) \Var(N_{pos}) \\
	&= N p (1-p) (1-P_B) \beta_{T^*}.
\end{align*}
Putting these together, we get:
\begin{align*}
	\Cov(W_1, W_2) = N p (1-p) \big\{P_{rec}^{PT} -(1-P_B) \beta_{T^*} \big\}.
\end{align*}

\subsubsection{Covariance of $W_1, W_3$}\label{cov:13}
\begin{align*}
	\Cov(W_1, W_3) &= \Cov\left(\hat{\Omega}_{T^*} - \hat{\beta}_{T^*} T^*, N_{rec}^{PT} - \hat{\beta}_{T^*} \sum_{i:D_i=1} (1-B_i) \right) \\
	&= \Cov\left(\hat{\beta}_{T^*} T^*, \hat{\beta}_{T^*} \sum_{i:D_i=1} (1-B_i) \right) \\
	&= T^* N p (1-P_B) \Var(\hat{\beta}_{T^*}).
\end{align*}

\newpage
\subsubsection{Covariance of $W_1, W_4$}\label{cov:14}
\begin{align*}
	\Cov(W_1, W_4) &= \underbrace{\Cov\left(\sum_{i:D_i=1} R_i^{PT}, \sum_{i:D_i=1} A_i T_i\right)}_{(I)} - \underbrace{\Cov\left(\sum_{i:D_i=1} R_i^{PT}, \sum_{i:D_i=1} A_i \int_0^{T_i} \hat{\phi}(u) du\right)}_{(II)} \\
	&\quad\quad - \underbrace{\Cov\left(\hat{\beta}_{T^*} \sum_{i:D_i=1}(1-B_i), \sum_{i:D_i=1} A_i T_i \right)}_{(III)} + \underbrace{\Cov\left(\hat{\beta}_{T^*} \sum_{i:D_i=1}(1-B_i), \sum_{i:D_i=1} A_i \int_0^{T_i} \hat{\phi}(u) du\right)}_{(IV)}.
\end{align*}
We first recognize that since $\hat{\phi}$ and $\hat{\beta}$ only appear in one of the two terms in the covariance each, we can replace them with their expectations (using iterated expectation). We have
\begin{align*}
	(I) &= N p \left\{\E\left(R_i^{PT} A_iT_i  \right) - p\E (R_i^{PT}) \E(A_iT_i) \right\} \\
	&= N p \left\{\E(R_i^{PT}A_iT_i) - pP_{rec}^{PT} P_A \mu_{T,A} \right\} \\
	(II) &= Np \left\{\E\left(R_i^{PT} A_i \int_0^{T_i} \phi(u) du\right) - p \E(R_i^{PT}) \E\left(A_i \int_0^{T_i} \phi(u) du \right) \right\}\\
	&= Np \left\{\E\left(R_i^{PT} A_i \int_0^{T_i} \phi(u) du\right) - p P_{rec}^{PT} P_A \omega_{T,A}\right\} \\
	(III) &= N p \beta_{T^*} \left\{\E \left((1-B_i)A_i T_i \right) - p \E(1-B_i) \E(A_iT_i)\right\} \\
	&= N p \beta_{T^*} \left\{P_A \mu_{T,A} - p(1-P_B) P_A \mu_{T,A}\right\} \\
	&= N p \beta_{T^*} P_A \mu_{T,A} \left\{1-p(1-P_B) \right\} \\
	(IV) &= N p \beta_{T^*} \left\{\E \left((1-B_i)A_i \int_0^{T_i} \phi(u) du \right) - p \E(1-B_i) \E\left(A_i \int_0^{T_i} \phi(u) du\right)\right\} \\
	&= N p \beta_{T^*} \left\{P_A \omega_{T,A} - p(1-P_B) P_A \omega_{T,A}\right\} \\
	&= N p \beta_{T^*} P_A \omega_{T,A} \left\{1-p(1-P_B) \right\}.
\end{align*}
Putting these together, we have
\begin{align*}
	\Cov(W_1,W_4) &= Np \underbrace{\left\{\E\left(R_i^{PT} A_iT_i  \right) -\E\left(R_i^{PT} A_i \int_0^{T_i} \phi(u) du\right)  \right\}}_{(\star)} \\
	&\quad\quad + NpP_A \left\{p P_{rec}^{PT} (\omega_{T,A} - \mu_{T,A}) - \beta_{T^*} (1-p(1-P_B)) (\mu_{T,A} - \omega_{T,A}) \right\}.
\end{align*}
To calculate $(\star)$, we need to calculate $\E\left(R_i^{PT} A_iT_i  \right)$ and $\E\left(R_i^{PT} A_i \int_0^{T_i} \phi(u) du\right)$.
First, we can express $R_i^{PT} A_i$ as:
\begin{align*}
	R_i^{PT}A_i &= \big\{R_i (1-(1-C_i)Q_i \Delta_i) + (1-R_i)C_iQ_i(1-\Delta_i)\big\} C_iQ_i \\
	&= R_i C_i Q_i + (1-R_i) C_iQ_i(1-\Delta_i) \\
	&= C_iQ_i(R_i + (1-R_i)(1-\Delta_i)) \\
	&= C_i Q_i R_i + C_i Q_i (1-R_i)(1-\Delta_i).
\end{align*}
Using similar derivations to that $\E(R_i^{PT}B_i)$ in \ref{var:w1}, we have, for any $f(T_i)$,
\begin{align*}
	\E(C_i Q_i R_i f(T_i)) &= \E\left(1(T_i \leq T^*) Q_i f(T_i) \int_0^{\infty} P(R_i=1|U_i=u) P_i(s,u) du \right) \\
	&= \big\{\beta_{T^*} + (\Omega_{T^*} - \beta_{T^*} T^*) \lambda(1-p)/p \big\} \cdot P_A \E(f(T_i)|A_i=1)
\end{align*}
where we have used the expression for $\int_0^{\infty} P(R_i=1|U_i=u) P_i(s,u) du$ that we derived in Appendix \ref{app:estimator}, equation \eqref{eq:prob-recent}. Recall that this is exactly $P_{rec}$, the probability of recent infection using only the original recency test. Therefore, let $P_{rec} = \big\{\beta_{T^*} + (\Omega_{T^*} - \beta_{T^*} T^*) \lambda(1-p)/p \big\}$. Using this, we have
\begin{align*}
	\E(C_i Q_i R_i T_i) &= P_{rec} \cdot P_A \cdot \mu_{T,A} \\
	\E\left(C_i Q_i R_i \int_0^{T_i} \phi(u) du \right) &= P_{rec} \cdot P_A \cdot \omega_{T,A} \\
	\E(C_i Q_i R_i T_i) - \E\left(C_i Q_i R_i \int_0^{T_i} \phi(u) du \right) &= P_A P_{rec} (\mu_{T,A} - \omega_{T,A})
\end{align*}
Next, we derive $\E(C_i Q_i (1-R_i) (1-\Delta_i) f(T_i))$:
\begin{align*}
	\E(C_i Q_i (1-R_i) (1-\Delta_i) f(T_i)) &= \E\left(f(T_i) \int_0^{\infty} P(T_i \leq T^*, Q_i=1,R_i=0,\Delta_i=0|U_i=u) P_i(s,u) du \right) \\
	&= \E \left(f(T_i) A_i \int_0^{T_i} (1-\phi(u)) P_i(s,u) du \right) \\
	&= \lambda (1-p) /p \E \left(f(T_i) A_i \int_0^{T_i} (1-\phi(u)) du \right).
\end{align*}
Next applying this expression to $f(T_i) = T_i$ and $f(T_i) = \int_0^{T_i} \phi(u) du$, we have
\begin{align*}
	\E(C_i Q_i (1-R_i) (1-\Delta_i) T_i) &= \lambda (1-p) /p \E \left(A_i T_i\int_0^{T_i} (1-\phi(u)) du \right) \\
	&= \lambda (1-p) /p\left\{\E(A_i T_i^2) - \E\left(A_i T_i \int_0^{T_i} \phi(u) du\right) \right\} \\
	\E\left(C_i Q_i (1-R_i) (1-\Delta_i) \int_0^{T_i} \phi(u) du \right) &= \lambda(1-p)/p \left\{\E\left(A_i\int_0^{T_i} \phi(u) du \int_0^{T_i} (1-\phi(u)) du \right) \right\} \\
	&= \lambda(1-p)/p \left\{\E\left(A_iT_i \int_0^{T_i} \phi(u) du \right) - \E\left(A_i \left[\int_0^{T_i} \phi(u)du \right]^2\right) \right\} \\
	\E\left(C_i Q_i (1-R_i) (1-\Delta_i) T_i \right) &-\E\left(C_i Q_i (1-R_i) (1-\Delta_i) \int_0^{T_i} \phi(u) du\right) \\
	&= \frac{\lambda(1-p)}{p} \Bigg\{\E(A_i T_i^2) + \E\left(A_i \left[\int_0^{T_i} \phi(u)du \right]^2\right) \\
	&\quad\quad\quad -2\E\left(A_iT_i\int_0^{T_i}\phi(u) du\right) \Bigg\} \\
	&= \frac{\lambda(1-p)}{p} P_A \left\{\mu_{T,A}^2 + \sigma_{T,A}^2 + \omega_{T,A}^2 + \sigma^2_{\omega_{T,A}} - 2\omega_{TA}^* \right\}.
\end{align*}
Therefore, we have the following expression for $(\star)$:
\begin{align*}
	(\star) = P_A P_{rec} (\mu_{T,A} - \omega_{T,A}) + \frac{\lambda(1-p)}{p} P_A \left\{\mu_{T,A}^2 + \sigma_{T,A}^2 + \omega_{T,A}^2 + \sigma^2_{\omega_{T,A}} - 2\omega_{T,A}^* \right\}
\end{align*}
Combining all of the terms, we have
\begin{align*}
	\Cov(W_1, W_4) &= Np \left\{(\star) + P_A p P_{rec}^{PT} (\omega_{T,A} - \mu_{T,A}) + P_A\beta_{T^*} (1-p(1-P_B)) (\mu_{T,A} - \omega_{T,A}) \right\} \\
	&= Np \left\{(\star) + P_A(\beta_{T^*} (1-p(1-P_B))-pP_{rec}^{PT}) (\mu_{T,A} - \omega_{T,A}) \right\} \\
	&= Np P_A \bigg\{(\mu_{T,A} - \omega_{T,A}) \big(P_{rec} - pP_{rec}^{PT} - \beta_{T^*}(1-p+pP_B) \big) \\
	&\quad\quad\quad\quad\quad + \frac{(P_{rec} - \beta_{T^*})}{(\Omega_{T^*} - \beta_{T^*} T^*)} \big(\mu_{T,A}^2 + \sigma_{T,A}^2 + \omega_{T,A}^2 + \sigma^2_{\omega_{T,A}} - 2\omega_{T,A}^*\big) \bigg\}.
\end{align*}
where we have used the relationship that $P_{rec} = \big\{\beta_{T^*} + (\Omega_{T^*} - \beta_{T^*} T^*) \lambda(1-p)/p \big\}$ to remove dependencies on an estimate of $\lambda$.

\newpage
\subsubsection{Covariance of $W_1, W_5$}\label{cov:15}
\begin{align*}
	\Cov(W_1, W_5) &= \Cov\left(N_{rec}^{PT}, \hat{\beta}_{T^*}\sum_{i:D_i=1} B_i T_i\right) - \Cov\left(\hat{\beta}_{T^*} \sum_{i:D_i=1}(1-B_i), \hat{\beta}_{T^*} \sum_{i:D_i=1} B_i T_i\right) \\
	&= \beta_{T^*} \Cov\left(\sum_{i:D_i=1} R_i^{PT}, \sum_{i:D_i=1} B_i T_i \right) - \left(\sigma^2_{\hat{\beta}_{T^*}} + \beta_{T^*}^2\right)\Cov\left(\sum_{i:D_i=1} (1-B_i), \sum_{i:D_i=1} B_i T_i \right) \\
	&\quad\quad\quad - \sigma^2_{\hat{\beta}_{T^*}} N p \E(1-B_i) N p \E(B_i T_i) \\
	&= N p \Big\{\beta_{T^*} \Big(\E(R_i^{PT}B_iT_i) -p \E(R_i^{PT}) \E(B_iT_i) \Big) \\
	&\quad\quad\quad +\left(\sigma^2_{\hat{\beta}_{T^*}} + \beta_{T^*}^2\right)  p\E(B_iT_i)\E(1-B_i) \\
	&\quad\quad\quad - \sigma^2_{\hat{\beta}_{T^*}} N p\E(1-B_i) \E(B_iT_i)\Big\}.
\end{align*}
We need to calculate $\E(R_i^{PT} B_iT_i)$. Using the same technique as in the derivation of $\E(R_i^{PT} B_i)$ in Section \ref{var:w1}, and replacing $\lambda (1-p) / p$ again, we have
\begin{align*}
	\E(R_i^{PT} B_i T_i) &= \frac{\lambda(1-p)}{p} \left\{\E (Q_i1(T_i > T^*)T_i)(\Omega_{T^*} - \beta_{T^*} T^*) + \E(Q_i1(T_i > T^*) T_i^2) \beta_{T^*} \right\} \\
	&= \frac{(P_{rec} - \beta_{T^*})}{(\Omega_{T^*} - \beta_{T^*} T^*)} P_B\left\{\mu_{T,B}(\Omega_{T^*} - \beta_{T^*} T^*) + (\sigma_{T,B}^2 + \mu^2_{T,B}) \beta_{T^*} \right\}.
\end{align*}
\begin{align*}
	\Cov(W_1, W_5) &= N p P_B \Big\{\beta_{T^*} \Big((P_{rec} - \beta_{T^*}) \Big(\mu_{T,B} + (\Omega_{T^*} - \beta_{T^*} T^*)(\sigma_{T,B}^2 + \mu^2_{T,B}) \beta_{T^*}\Big) - pP_{rec}^{PT} \mu_{T,B} \Big) \\
	&\quad\quad\quad\quad + p \Big(\sigma^2_{\hat{\beta}_{T^*}} + \beta_{T^*}^2 \Big) \mu_{T,B} (1-P_B) - \sigma^2_{\hat{\beta}_{T^*}} N p (1-P_B) \mu_{T,B}\Big\}.
\end{align*}

\newpage
\subsubsection{Covariance of $W_2, W_4$}\label{cov:24}

\begin{align*}
	 		\Cov(W_2, W_4) &= \E\left(N_{pos} \sum_{i:D_i=1} A_i \left( T_i - \int_0^{T_i} \hat{\phi}(u) du\right) \right) - \E\left(N_{pos} \right) \E\left(\sum_{i:D_i=1} A_i\left(T_i - \int_0^{T_i} \hat{\phi}(u) \right)\right) \\
	 		&= (\mu_{T,A} - \omega_{T,A}) P_A \left(\E\left(N_{pos}^2\right) - \E\left(N_{pos}\right)^2\right) \\
	 		&= N p(1-p) P_A (\mu_{T,A} - \omega_{T,A}).
	 	\end{align*}
\subsubsection{Covariance of $W_2, W_5$}\label{cov:25}
\begin{align*}
	\Cov(W_2, W_5) &= \E \left(N_{pos} \hat{\beta}_{T^*} \sum_{i:D_i=1} B_i T_i\right) - \E(N_{pos}) \E \left(\hat{\beta}_{T^*} \sum_{i:D_i=1} B_i T_i \right) \\
	&= \beta_{T^*} \E(B_i T_i) \E(N_{pos}^2) - \E(N_{pos})^2 \beta_{T^*} \E(B_i T_i) \\
	&= N p (1-p) \beta_{T^*} P_B \mu_{T,B}.
\end{align*}

\newpage
\subsubsection{Covariance of $W_3, W_4$}\label{cov:34}
Since $\hat{\beta}_{T^*}$ is independent of everything else, we only need to consider the covariance with $\hat{\Omega}_{T^*}$.
\begin{align*}
	 		\Cov(W_3, W_4) &= \Cov\left(\hat{\Omega}_{T^*}, \sum_{i:D_i=1} A_i \left(T_i -\int_0^{T_i} \hat{\phi}(u) du\right) \right)\\
	 		&= \E \left(\hat{\Omega}_{T^*} \sum_{i:D_i=1} A_i \left( T_i - \int_0^{T_i} \hat{\phi}(u) du \right) \right) - \E\left(\hat{\Omega}_{T^*} \right) \E\left(\sum_{i:D_i=1} A_i \left(T_i - \int_0^{T_i} \hat{\phi}(u) du \right)\right) \\
	 		&= \E\left(\hat{\Omega}_{T^*} \right) \E\left(\sum_{i:D_i=1} A_i \int_0^{T_i} \hat{\phi}(u) du \right) -\E \left(\hat{\Omega}_{T^*} \sum_{i:D_i=1} A_i \int_0^{T_i} \hat{\phi}(u) du \right) \\
	 		&= -\E \left(\sum_{i:D_i=1} A_i \int_0^{T^*} \int_0^{T_i} \E[\hat{\phi}(u) \hat{\phi}(v)] - \E[\hat{\phi}(u)] \E[\hat{\phi}(v)] du dv\right) \\
	 		&= -\E \left(\sum_{i:D_i=1} A_i \int_0^{T^*} \int_0^{T_i} \rho(u, v) du dv \right) \\
	 		&= -N p \cdot P_A \cdot r_{T,A}^*.
	 \end{align*}

\subsubsection{Covariance of $W_3, W_5$}\label{cov:35}
\begin{align*}
	\Cov(W_3, W_5) &= \Cov\left(\hat{\Omega}_{T^*} - \hat{\beta}_{T^*} T^*, \hat{\beta}_{T^*}\sum_{i:D_i=1} B_i T_i \right) \\
	&= -T^*\Cov\left(\hat{\beta}_{T^*}, \hat{\beta}_{T^*} \sum_{i:D_i=1} B_i T_i\right) \\
	&= -T^* \E\left(\sum_{i:D_i=1} B_i T_i \right)\Var(\hat{\beta}_{T^*}) \\
	&= - N p \cdot T^* \cdot P_B \mu_{T,B} \sigma^2_{\hat{\beta}_{T^*}}.
\end{align*}

\newpage
\subsubsection{Covariance of $W_4, W_5$}\label{cov:45}
Recalling that $\hat\beta_{T^*} \perp \hat{\phi} \perp (A_i, B_i, T_i)$, we have that
	\begin{align*}
		\Cov(W_4, W_5) &= \Cov\left(\sum_{i:D_i=1} A_i \left(T_i - \int_0^{T_i} \hat{\phi}(u) du\right), \sum_{i:D_i=1} \hat{\beta}_{T^*} B_i T_i \right)\\
		&= \beta_{T^*} \Cov\left(\sum_{i:D_i=1} A_i \left(T_i - \int_0^{T_i} \phi(u) du\right), \sum_{i:D_i=1} B_i T_i\right) \\
		&= N p \beta_{T^*} \bigg\{\E\left(A_iB_i T_i^2 \right) - p \E(B_i T_i) \E \left(A_i T_i  \right) - \\
		&\quad\quad\quad \E\left(A_iB_i T_i \int_0^{T_i} \phi(u) du \right) + p \E(B_i T_i) \E \left(A_i \int_0^{T_i} \phi(u) du  \right) \bigg\} \\
		&= N p^2 \beta_{T^*} P_B P_A \mu_{T,B} (\omega_{T,A} - \mu_{T,A})
	\end{align*}
	where we have used the fact that $A_i B_i = 0$.
		
\newpage
	 \subsection{Calculating the Variance using the Delta Method}
	 	 With our simplifying assumptions, we have that
	 \begin{align*}
		\log \hat{\lambda} = \log{W_1} - \log(N-W_2) - \log (W_3 + (W_4+W_5)/W_2)
	\end{align*}
	Let $\mathbb{D} = (\E W_4 + \E W_5)/\E W_2$. Then, with $g(W) = \log \hat{\lambda}$, we have:
	\begin{align*}
		\nabla g(W)\Big|_{W=\E W} &= \begin{pmatrix}
			\frac{1}{\E W_1} & \frac{1}{N-\E W_2} + \frac{ \mathbb{D}}{\E W_3 \E W_2 + \E W_4 + \E W_5} & -\frac{1}{\E W_3 + \mathbb{D}} & -\frac{1}{\E W_2(\E W_3 +\mathbb{D})} & -\frac{1}{\E W_2(\E W_3 +\mathbb{D})}
		\end{pmatrix}.
	\end{align*}	
	Thus, using the delta method, the variance of $\log \hat{\lambda}$ is given by the following expression:
	\begin{align*}
		\Var(\log \hat{\lambda}) = \Big\{\nabla g(W)\Big|_{W=\E W}\Big\}^T \Sigma_{W}\Big\{\nabla g(W)\Big|_{W=\E W}\Big\} 
	\end{align*}
	where $\Sigma_{W}$ is the covariance matrix of the components of $W$. We can further find the variance of $\hat{\lambda}$ by applying the delta method again with the transformation $h(\lambda) = \exp(\lambda)$, so the final variance of $\hat{\lambda}$ is $\Var(\hat{\lambda}) = \hat{\lambda}^2 \Var(\log \hat{\lambda})$.

\newpage
\subsection{Comparison with Standard Estimator}

We can compare the variance components with those of the standard estimator by setting $P_B = 0$ and $P_A = 0$, and see that they are identical (see Appendix A of \cite{gao2020sample}).
\begin{align*}
	 \E(W_1) &= N p \cdot \big(P_{rec} - \beta_{T^*} \big) \\
	 \E(W_2) &= Np\\
	 \E(W_3) &= \Omega_{T^*} - \beta_{T^*} T^* \\
	 \E(W_4) &= 0\\
	 \E(W_5) &= 0 \\
	 \hyperref[var:w1]{\Var(W_1)} &= N p \Big\{P_{rec}(1-P_{rec}) + (1-p) (P_{rec} - \beta_{T^*})^2 + \sigma^2_{\hat{\beta}_{T^*}} (1-p + N p ) \Big\} \\	 \Var(W_2) &= Np(1-p)\\
	 \Var(W_3) &= \sigma^2_{\hat{\Omega}_{T^*}} + \sigma^2_{\hat{\beta}_{T^*}} T^{*2}\\
	 \hyperref[var:w4]{\Var(W_4)} &= 0 \\
	 \hyperref[var:w5]{\Var(W_5)} &= 0 \\
	 \hyperref[cov:12]{\Cov(W_1, W_2)} &= N p (1-p) \cdot \big\{P_{rec} - \beta_{T^*} \big\} \\
	 \hyperref[cov:13]{\Cov(W_1, W_3)} &= N p \cdot T^* \sigma^2_{\hat{\beta}_{T^*}} \\
	 \hyperref[cov:14]{\Cov(W_1, W_4)} &= 0 \\
	 \hyperref[cov:15]{\Cov(W_1, W_5)} &= 0 \\
	 \Cov(W_2, W_3) &= 0 \\
	 \hyperref[cov:24]{\Cov(W_2, W_4)} &= 0  \\
	 \hyperref[cov:25]{\Cov(W_2, W_5)} &= 0 \\
	 \hyperref[cov:34]{\Cov(W_3, W_4)} &= 0 \\
	 \hyperref[cov:35]{\Cov(W_3, W_5)} &= 0 \\
	 \hyperref[cov:45]{\Cov(W_4, W_5)} &= 0	 
\end{align*}

\newpage
	 \subsection{Variance Reduction when Using All Non-Recent Tests}
\begingroup
    \fontsize{10pt}{12pt}\selectfont
    We show the derivations for complicated terms that are not already included in \cite{gao2020sample}. To compare this with the variance of the standard estimator, since $P_{rec}^{PT} = \frac{\lambda(1-p)}{p} \left( \Omega_{T^*} - \beta_{T^*} (T^* - P_B \mu_{T,B})) + \beta(1-P_B)\right)$, and $P_{rec} = \frac{\lambda(1-p)}{p} \left( \Omega_{T^*} - \beta_{T^*} T^* \right) + \beta_{T^*}$, we can write
    \begin{align*}
        P_{rec}^{PT} = P_{rec} - \beta_{T^*} P_B \left( 1 - \frac{\lambda(1-p)}{p} \mu_{T,B}\right).
    \end{align*}
	 \begin{align*}
	 \E(W_1) &= N p \cdot \big\{P_{rec}^{PT}\big\} \\
	 \E(W_2) &= Np\\
	 \E(W_3) &= \Omega_{T^*} - \beta_{T^*} T^* \\
	 \E(W_4) &= 0\\
	 \E(W_5) &= N p \cdot \beta_{T^*} \cdot \mu_{T,B} \\
	 \hyperref[var:w1]{\Var(W_1)} &= N p \Big\{P_{rec}^{PT}(1-P_{rec}^{PT}) + (1-p) (P_{rec}^{PT})^2 \\
	&\quad\quad\quad - \beta_{T^*}\big\{2 \left(P_{rec}^{PT} - (P_{rec} - \beta_{T^*}) \left\{ 1 + \mu_{T,B}\beta_{T^*}/(\Omega_{T^*} - \beta_{T^*} T^*) \right\}\right) \big\} \Big\} \\	 \Var(W_2) &= Np(1-p)\\
	 \Var(W_3) &= \sigma^2_{\hat{\Omega}_{T^*}} + \sigma^2_{\hat{\beta}_{T^*}} T^{*2}\\
	 \hyperref[var:w4]{\Var(W_4)} &= 0 \\
	 \hyperref[var:w5]{\Var(W_5)} &= N p \Big\{\sigma^2_{\hat{\beta}_{T^*}} \mu^2_{T,B} N p + \left(\sigma^2_{\hat{\beta}_{T^*}} + \beta^2_{T^*} \right) \left(\sigma^2_{T,B} + \mu_{T,B}^2(1-p) \right)\Big\}\\
	 \hyperref[cov:12]{\Cov(W_1, W_2)} &= N p (1-p) \big\{P_{rec}^{PT} \big\} \\
	 \hyperref[cov:13]{\Cov(W_1, W_3)} &= 0 \\
	 \hyperref[cov:14]{\Cov(W_1, W_4)} &= 0 \\
	 \hyperref[cov:15]{\Cov(W_1, W_5)} &= N p\beta_{T^*} \Big[(P_{rec} - \beta_{T^*}) \Big\{\mu_{T,B} + (\Omega_{T^*} - \beta_{T^*} T^*)(\sigma_{T,B}^2 + \mu^2_{T,B}) \beta_{T^*}\Big\} - pP_{rec}^{PT} \mu_{T,B} \Big] \\
	 \hyperref[cov:24]{\Cov(W_2, W_4)} &= 0  \\
	 \hyperref[cov:25]{\Cov(W_2, W_5)} &= N p (1-p) \cdot \beta_{T^*} \cdot \mu_{T,B} \\
	 \hyperref[cov:34]{\Cov(W_3, W_4)} &= 0 \\
	 \hyperref[cov:35]{\Cov(W_3, W_5)} &= - N p \cdot T^* \cdot \mu_{T,B} \sigma^2_{\hat{\beta}_{T^*}}\\
	 \hyperref[cov:45]{\Cov(W_4, W_5)} &= 0
 \end{align*}  
\endgroup

\section{Additional Simulation Results}\label{app:add-sim}

\begin{enumerate}[label=(\alph*)]
		\item (Table \ref{tab:main}): Main simulation results from Figure \ref{fig:main-sim}, adding the estimated standard error and coverage for the standard and enhanced estimators.
        \item (Table \ref{tab:scenario-c}): Simulation results from sensitivity analysis with non-constant incidence.
        \item (Figure \ref{fig:hptn-selfreport}): Version of Figure \ref{fig:hptn-results} including the enhanced estimator with self-reported prior test results.
        \item (Figure \ref{fig:hptn-validation}): Version of Figure \ref{fig:hptn-results} where we have done a random 50-50 split between the data used for cross-sectional estimators and the longitudinal estimator.
\end{enumerate}

\newpage

\begin{table}[htbp!]
\centering
\caption{Simulation results (5,000 simulated datasets) comparing bias, standard error (SE), standard error estimates (SEE), coverage of the standard and enhanced estimators across range of prior test times (uniformly distributed within the range $(a, b)$) available for $q$ fraction of participants. We also include the percentage reduction in mean squared error over the standard estimator by including prior test results. Bias, SE, and SEE are multiplied by 100.}\label{tab:main}
\begin{tabular}{cc|ccccc}
  \hline
\multirow{2}{*}{$T_i \in (a, b)$} & \multirow{2}{*}{$q$} & \multirow{2}{*}{Bias$\times 10^2$} & \multirow{2}{*}{SE$\times 10^2$} & \multirow{2}{*}{SEE$\times 10^2$} & \multirow{2}{*}{Coverage} & \% Reduction \\
	& & & & & & in MSE
 \\
  \hline
  \multicolumn{7}{c}{\textit{Standard Estimator}} \\
  \hline
-- & -- & 0.10 & 1.01 & 0.98 & 94.84 & 0.00 \\ 
   \hline 
(0, 2) & 0.2 & 0.06 & 0.64 & 0.64 & 94.94 & 59.29 \\ 
  (0, 2) & 0.4 & 0.04 & 0.51 & 0.50 & 94.86 & 74.83 \\ 
  (0, 2) & 0.6 & 0.02 & 0.42 & 0.42 & 95.00 & 82.87 \\ 
  (0, 2) & 0.8 & -0.02 & 0.37 & 0.37 & 94.62 & 86.35 \\ 
  (0, 2) & 1 & 0.00 & 0.34 & 0.33 & 94.98 & 88.74 \\ 
   \hline 
(0, 4) & 0.2 & 0.06 & 0.73 & 0.74 & 95.68 & 47.29 \\ 
  (0, 4) & 0.4 & 0.03 & 0.60 & 0.60 & 95.06 & 65.00 \\ 
  (0, 4) & 0.6 & 0.04 & 0.50 & 0.51 & 95.26 & 75.16 \\ 
  (0, 4) & 0.8 & 0.03 & 0.45 & 0.45 & 94.48 & 79.78 \\ 
  (0, 4) & 1 & 0.02 & 0.41 & 0.40 & 94.96 & 83.60 \\ 
   \hline 
(2, 4) & 0.2 & 0.12 & 0.89 & 0.89 & 95.20 & 20.63 \\ 
  (2, 4) & 0.4 & 0.11 & 0.83 & 0.81 & 95.14 & 31.58 \\ 
  (2, 4) & 0.6 & 0.10 & 0.74 & 0.74 & 95.40 & 45.86 \\ 
  (2, 4) & 0.8 & 0.12 & 0.69 & 0.68 & 95.08 & 51.64 \\ 
  (2, 4) & 1 & 0.08 & 0.63 & 0.62 & 94.98 & 60.74 \\ \hline
\end{tabular}
\end{table}

\newpage

\begin{table}[htbp!]
\centering
\caption{Simulation results comparing bias, standard error, MSE, and coverage of the standard and enhanced estimators when assumption of constant incidence (Assumption \ref{as:constant-inc}) is violated. $q^*$ is the average proportion of tests actually used for the estimator across simulations.}\label{tab:scenario-c}
\begin{tabular}{cccc|c|cccc}
  \hline
 $q$ & Incidence & $\max(T_i)$ & Tests & $q^*$ & Bias$\times 10^2$ & SE$\times 10^2$ & MSE$\times 10^4$ & Coverage \\ 
  \hline
 \multicolumn{9}{c}{\textit{Standard Estimator}} \\
  \hline
-- & Constant & -- & -- & -- & 0.09 & 0.98 & 0.01 & 95.40 \\ 
   \hline 
 \multicolumn{9}{c}{\textit{Enhanced Estimator}} \\
  \hline
0.5 & Constant & 4 & all tests & 0.50 & 0.06 & 0.56 & 0.32 & 94.98 \\ 
  0.5 & Constant & 4 & only recent & 0.25 & 0.03 & 0.61 & 0.37 & 94.62 \\ 
  \hline
  0.5 & Constant & 12 & all tests & 0.50 & 0.08 & 0.66 & 0.44 & 94.66 \\ 
  0.5 & Constant & 12 & only recent & 0.08 & 0.06 & 0.80 & 0.65 & 95.14 \\ 
  \hline
 \multicolumn{9}{c}{\textit{Standard Estimator}} \\
   \hline 
-- & Piecewise & -- & -- & -- & 0.11 & 1.00 & 0.01 & 95.30 \\ 
   \hline 
 \multicolumn{9}{c}{\textit{Enhanced Estimator}} \\
  \hline
0.5 & Piecewise & 4 & all tests & 0.50 & 0.04 & 0.56 & 0.31 & 95.14 \\ 
  0.5 & Piecewise & 4 & only recent & 0.25 & 0.02 & 0.60 & 0.37 & 95.26 \\ 
   \hline 
0.5 & Piecewise & 12 & all tests & 0.50 & 0.15 & 0.66 & 0.46 & 94.84 \\ 
  0.5 & Piecewise & 12 & only recent & 0.08 & 0.09 & 0.81 & 0.66 & 95.24 \\ 
   \hline
   
\end{tabular}\end{table}

\newpage

\begin{figure}[htbp]
	\caption{Incidence estimates for the 21 communities in the HPTN 071 cohort at PC24. The red line is the standard estimator. The purple line is the longitudinal incidence estimator (cases / person-time). The green line is the enhanced estimator based on HPTN 071 study-administered prior HIV tests. The blue line is the enhanced estimator based self-reported data. The table below the plot includes: N Pos, the number of individuals identified as HIV positive ($N_{pos}$); N Rec (RITA), the number identified as RITA-recent ($N_{rec}$); N Rec (PT-RITA) (Study), the number identified as PT-RITA-recent based on study-administered HIV tests ($N_{rec}^{PT}$); N Rec (PT-RITA) (Self.), the number identified as PT-RITA-recent based on self-reported HIV tests ($N_{rec}^{PT}$). Confidence intervals created on the log incidence scale and exponentiated. The variance calculation for the enhanced estimator requires there to be either zero people with non-recent test results, or more than one person with non-recent test results. If there is only one person with non-recent test results, certain terms in the variance formula cannot be calculated. In the one community where that occurred, we removed the single non-recent test result. Communities are ordered by longitudinal incidence estimate (lowest incidence on the left).}
	\centering
	\includegraphics[width=\textwidth]{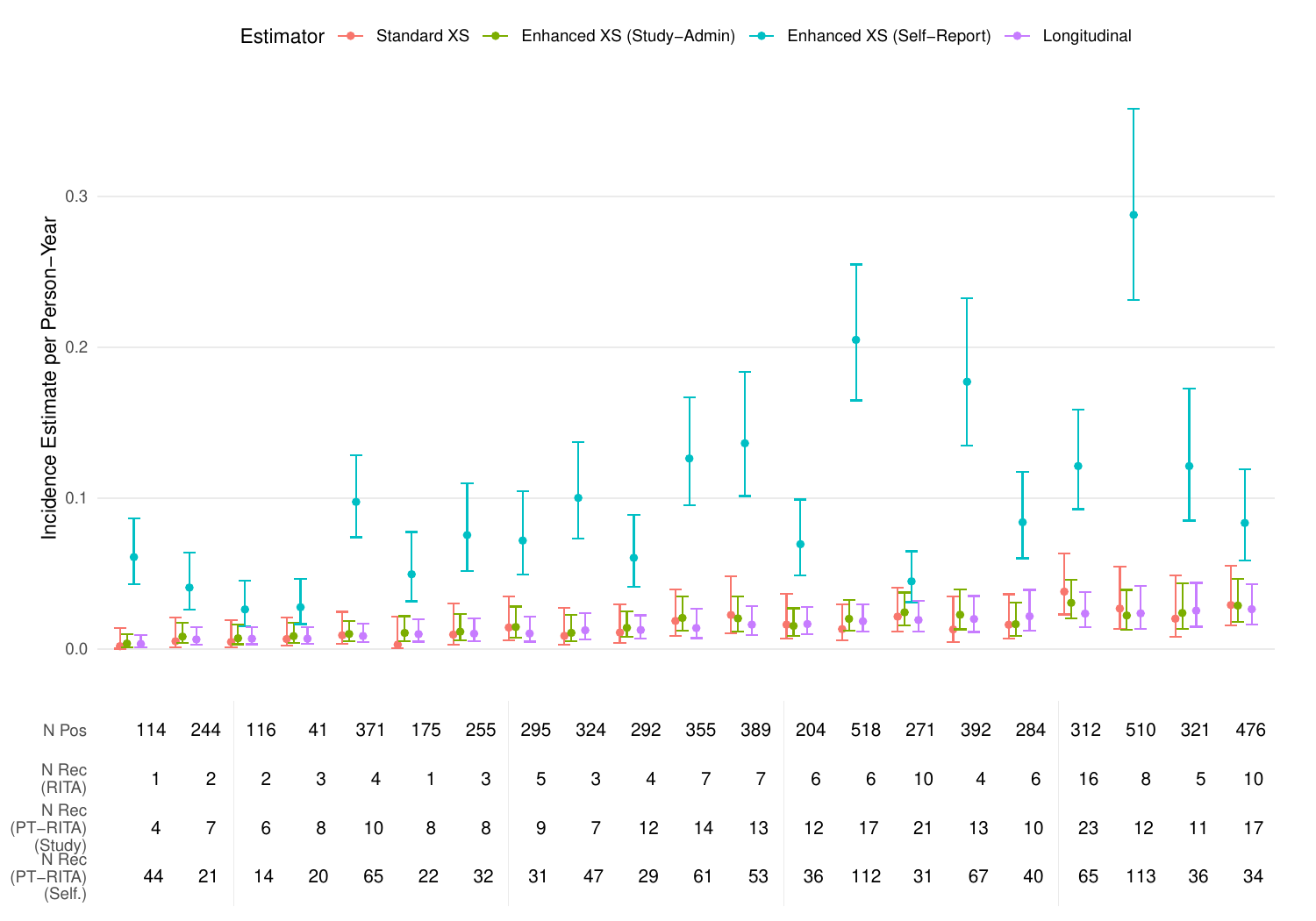}\label{fig:hptn-selfreport}
\end{figure}

\newpage

\begin{figure}[htbp]
	\caption{Incidence estimates for the 21 communities in the HPTN 071 cohort at PC24, split into a random 50-50 training and validation set. The red line is the standard estimator applied to the training set. The green line is the longitudinal incidence estimator (cases / person-time) applied to the validation set. The blue line is the enhanced estimator based on HPTN 071 study-administered prior HIV tests applied to the training set (the dotted line is only using 50\% of available prior test data). Confidence intervals created on the log incidence scale and exponentiated. Communities are ordered by longitudinal incidence estimate (lowest incidence on the left).}
	\centering
	\includegraphics[width=\textwidth]{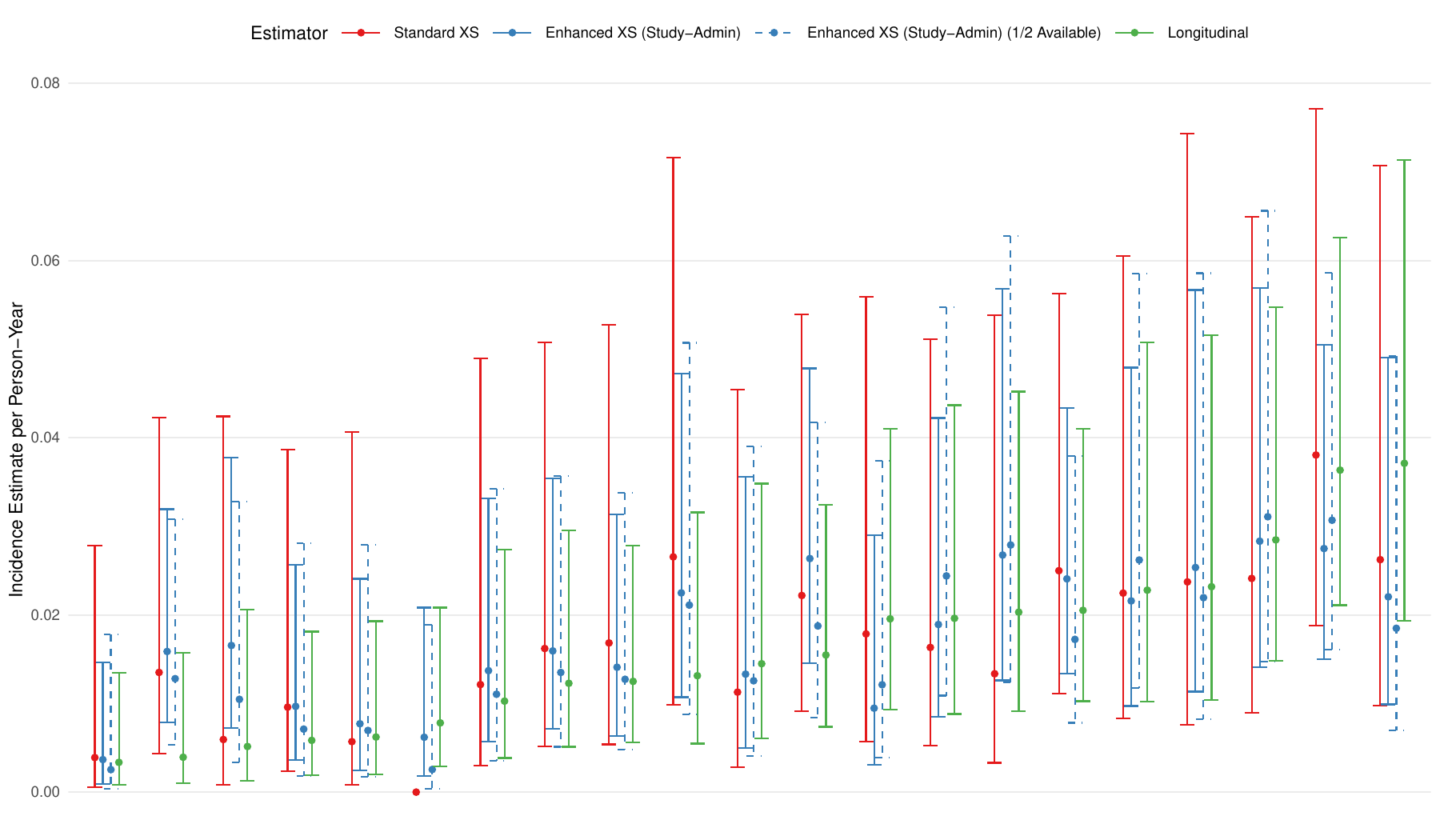}\label{fig:hptn-validation}
\end{figure}
\section{Simulation Derivations}

\subsection{Prior Testing Time and Availability Dependent on Infection Duration}\label{app:sim:duration}

We consider two mechanisms for prior testing. The first mechanism is a baseline level of HIV testing that is uniformly distributed across an interval $[a', b']$, which may be different than the interval for tests that are ultimately used for the enhanced estimator $[a, b]$. Let $T_1 \sim \mathrm{Uniform}(a', b')$ and let $A' \sim \mathrm{Bernoulli}(q')$ indicate receipt of the test $T_1$. We assume that $q'$ proportion of individuals have an HIV test falling into that range.

The second testing mechanism is prompted by a new infection. For a given infection duration $u \in (0, \tau]$, let $E \sim P$ where $P$ is some probability distribution, and $T_2 = u - E$. The ultimate prior testing time is the most recent available, i.e.,
\begin{align*}
	T = \begin{cases}
 	T_1 &\mbox{ if } (T_1 < T_2 \mbox{ or } T_2 < 0) \mbox{ and } A' = 1 \\
 	T_2 &\mbox{ otherwise }
 \end{cases}
\end{align*}
There is a chance that the resulting $T$ will be negative, that is, $T_2$ may fall some time in the future because the test has not been conducted yet and there is not a $T_1$ available. Therefore, we let $Q = 1(T \geq 0)$, the availability of the prior test result. Finally, $\Delta = 1(T \leq u)$. The triplet $(Q, QT, Q\Delta)$ is used as the information available to the enhanced algorithm.

We estimate $P$ by using data from \cite{maheu-girouxDeterminantsTimeHIV2017} on time from HIV infection to linkage to care. Specifically, we use non-linear least squares to fit a Generalized Gamma cumulative distribution function to the time from infection to linkage to care in Figure 1(a) from \cite{maheu-girouxDeterminantsTimeHIV2017}. We include a visual depiction of the testing times based on $T_1$ and $T_2$ in Figure \ref{fig:t1t2-diag}. We include the prior testing time (given that it is available) as a function of $u$, and the availability of the prior testing time as a function of $u$, in Figure \ref{fig:et-u}. 

\begin{figure*}
\caption{Examples of testing times for the two testing mechanisms for someone with a given infection duration $u$. Tests highlighted in green are the tests that would be reported and used in the enhanced algorithm. Tests crossed out in red are not available to use either because they were never conducted ($A' = 0$) or because they would be conducted in the future, past time zero $(T_2 < 0)$. We include the triplet that would be used for the enhanced algorithm at the right side.}\label{fig:t1t2-diag}
	\includegraphics[width=\textwidth]{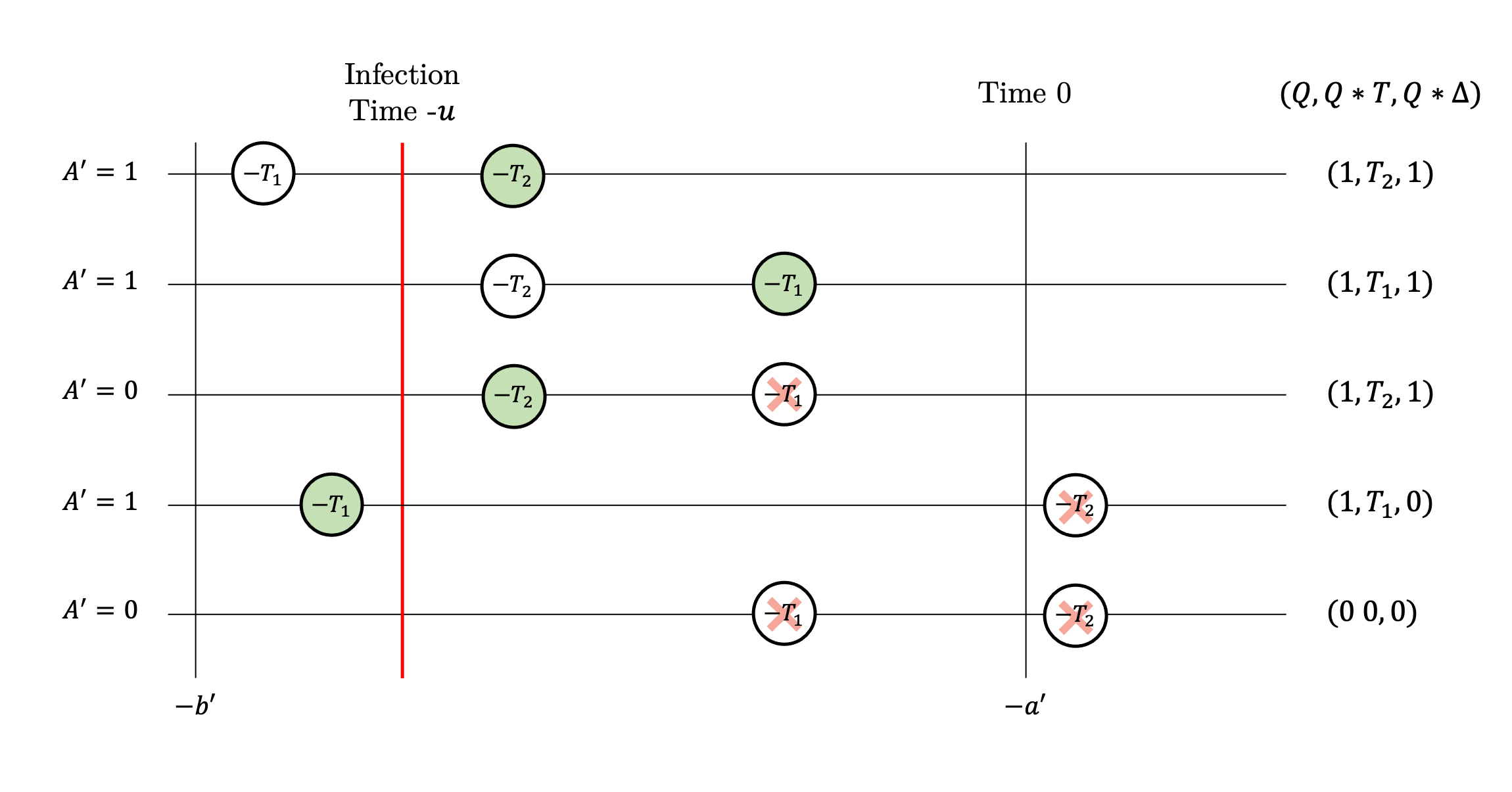}
\end{figure*}

\begin{figure*}[htbp]
\caption{Mean prior testing time for available tests, and availability of prior tests by duration of infection, based on 10,000 simulations. We set $a = 0$, $b = 4$, and $\tau = 10$. The top two plots show the results with baseline HIV testing proportion of $q = 0.4$ and the bottom two show $q = 0.8$.}\label{fig:et-u}
	\includegraphics[width=\textwidth]{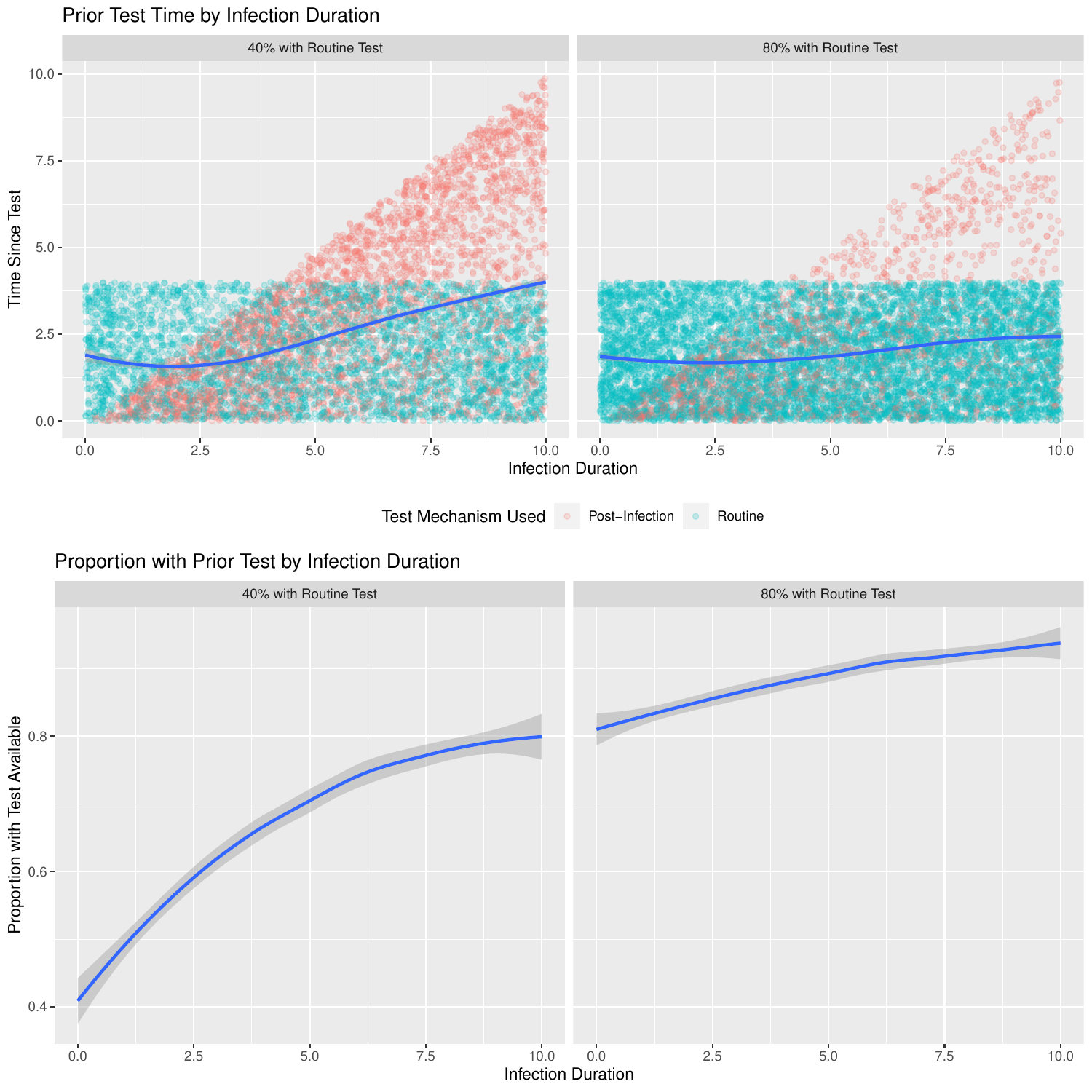}
\end{figure*}

\newpage
\subsection{Derivation of Constant-Linear Incidence Function}\label{app:sim:constant-linear}

We consider the following as our incidence function under misspecification:
\begin{align*}
	\lambda(s) = \lambda(t) + \rho (t - s - T^*) 1(t - s > T^*)
\end{align*}
where $t$ is the current cross-sectional time, and $s \in [t, t-\tau]$ (this is slightly different than $s$ being the cross-sectional time in the main text).
The goal is to simulate historical infection times based on this underlying incidence function. Let $\lambda \equiv \lambda(t)$. Following the setup in Appendix A.2 of \cite{gao2021statistical}, we have that $c_t \geq T^*$ is the solution to the following:
\begin{align}\label{app:sim:ct}
	\frac{1-p}{p} \int_{t-c_t}^{t} \lambda + \rho(t-s-T^*)1(t-s>T^*) ds = 1.
\end{align}
Since the integrand is non-negative everywhere, we can break this up into two parts: first we find the $1-e^*$, which is the value obtained from integrating from $t-T^*$ to $t$. We then use this value to solve for $c_t$. That is, we have
\begin{align*}
	\frac{1-p}{p} \int_{t-T^*}^t \lambda ds &= 1-e^* \\
	\lambda \frac{1-p}{p} T^* &= 1-e^*.
\end{align*}
Then we subtract off $1-e^*$ from the RHS of \eqref{app:sim:ct} to solve for $c_t$.
\begin{align}
\begin{split}\nonumber
	\frac{1-p}{p} \int_{t-c_t}^{t-T^*} \lambda + \rho(t-s-T^*) ds &= e^* \\
	\frac{1-p}{p}\lambda (c_t - T^*) + \rho \frac{1-p}{p} \int_{t-c_t}^{t-T^*} t-s-T^* ds &= 1-\lambda \frac{1-p}{p} T^* \\
	\lambda c_t + \rho \int_{0}^{c_t-T^*} v dv &= \frac{p}{1-p} \quad \text{(letting $v = t-s-T^*$)} \\
	\lambda c_t + \frac{\rho}{2} (c_t - T^*)^2 &= \frac{p}{1-p}
\end{split} \\
\begin{split}\label{app:sim:ctsol}
	\frac{\rho}{2} c_t^2 + c_t (\lambda - \rho T^*) &= \frac{p}{1-p}.
\end{split}
\end{align}
Thus, $c_t$ is the solution to \ref{app:sim:ctsol}. Now we simulate $e \sim \mathrm{Uniform}(0, 1)$. We want to solve for $T$ in the following:
\begin{align*}
	\int_{t-c_t}^{T} \lambda(s) \frac{1-p}{p} ds = e.
\end{align*}
However, if $e > e^*$, then we know that $T$ must be within $[t-T^*, t]$, and it will only have contribution from the constant portion of the incidence function. That is, we have the following:
\begin{align*}
	\int_{t-c_t}^T \lambda(s) \frac{1-p}{p} ds = \underbrace{\int_{t-c_t}^{t-T^*} \lambda(s) \frac{1-p}{p} ds}_{e^*} + \int_{t-T^*}^T \lambda \frac{1-p}{p} ds = e^* + \lambda\frac{1-p}{p}(T-t+T^*) = e.
\end{align*}
Solving the above for $T$, we have
\begin{align*}
	1+\lambda \frac{1-p}{p} (T-t) &= e \implies 
	T = t-\frac{p(1-e)}{\lambda(1-p)}.
\end{align*}
On the other hand, if $e \leq e^*$, then $T \in [t-c_t, t-T^*]$, and we solve the following integral
\begin{align*}
	\int_{t-c_t}^{T} \lambda(s) \frac{1-p}{p} ds = e
\end{align*}
which is equivalent to solving
\begin{align*}
	\int_{T}^{t-T^*} \lambda(s) \frac{1-p}{p} ds &= e^* - e \\
	\lambda (t - T^* - T) + \rho \int_{T}^{t-T^*} t - s - T^* ds&= (e^* - e)\frac{p}{1-p} \\
	\lambda (t-T^* - T) + \rho \int_{0}^{t-T-T^*} v dv&= (e^* - e)\frac{p}{1-p} \\
	\lambda (t - T^* - T) + \frac{\rho}{2} (t - T^* - T)^2 &= (e^* - e) \frac{p}{1-p} \\
	\lambda(t - T) + \frac{\rho}{2} (t - T^* - T)^2 &= \frac{p(1-e)}{1-p} \\
	&= \frac{p(1-e)}{1-p}.
\end{align*}
Expanding $(t-T^*-T)^2$, we have
\begin{align*}
	\frac{\rho}{2}(t-T^*-T)^2 = \rho\big\{t^2/2 - tT - tT^* + TT^* + T^{*2}/2 + T^*/2\big\}.
\end{align*}
Thus, grouping terms, we can solve for $T$ by using the quadratic formula with $a = \frac{\rho}{2}$, $b = (-\lambda - \rho t + \rho T^*)$ and $c = \lambda t + \frac{\rho}{2} t^2  - \rho t T^* + \frac{\rho}{2}T^* - 
\frac{p(1-e)}{(1-p)}$. With these, we have
\begin{align*}
	b^2 = (\lambda^2 + 2\rho t \lambda - 2\lambda \rho T^* - 2\rho^2 t T^* + \rho^2 t^2 + \rho^2 T^{*2}) 
\end{align*}
so the quadratic formula tells us that we have a solution for $T$ of
\begin{align*}
	T &= \lambda/\rho + (t-T^*) - \sqrt{b^2 - 2\rho c}/\rho \\
	&= (t-T^*) - (\sqrt{b^2 - 2\rho c} - \lambda)/\rho.
\end{align*}
Calculating the term inside the square root, we have that
\begin{align*}
	b^2 - 2\rho c &= (\lambda^2 + 2\rho t \lambda - 2\lambda \rho T^* - 2\rho^2 t T^* + \rho^2 t^2 + \rho^2 T^{*2}) - 2\rho \Bigg\{\lambda t + \frac{\rho}{2} t^2  - \rho t T^* + \frac{\rho}{2}T^* - 
\frac{p(1-e)}{(1-p)}\Bigg\} \\
&= (\lambda^2  - 2\lambda \rho T^*) - 2\rho \Bigg\{- 
\frac{p(1-e)}{(1-p)}\Bigg\} \\
&= \lambda^2 + 2\rho \Big\{\frac{p(1-e)}{1-p} - \lambda T^* \Big\}.
\end{align*}
Putting these together, the full expression is
\begin{align*}
	T = (t-T^*) - \Bigg(\sqrt{\lambda^2 + 2\rho \Big\{\frac{p(1-e)}{1-p} - \lambda T^* \Big\}} - \lambda\Bigg)/\rho.
\end{align*}
Therefore, with our simulated $e$, we calculate $T$ using the following:
\begin{align*}
T =
	\begin{cases}
		(t-T^*) - \Bigg(\sqrt{\lambda^2 + 2\rho \Big\{\frac{p(1-e)}{1-p} - \lambda T^* \Big\}} - \lambda\Bigg)/\rho &\mbox{ if } e \leq e^* \\
		t-\frac{p(1-e)}{\lambda(1-p)} &\mbox{ if } e > e^*.
	\end{cases}
\end{align*}
The two are equivalent when $e = e^*$ because $p(1-e^*)/(1-p) = \lambda T^*$, so the first case is $t - T^*$ since the inside of the square root is zero, and the second case is $t - \lambda T^*/\lambda = t-T^*$ as well. Thus, if $e = e^*$, we could simply set $T = t - T^*$. The incidence and infection time distributions corresponding to the above derivations are show in Figure \ref{app:fig:lincon-inc}.

\begin{figure}[htbp]
	\caption{Incidence function and distribution of 10,000 simulated infection times for the constant incidence function $\lambda(s) = \lambda$, and constant-linear incidence function $\lambda(s) = \lambda + \rho (t - s - T^*) 1(t - s > T^*)$, with $\rho = 0.0039$ and $\lambda = 0.032$ based on data fit to \cite{pattanasinRecentDeclinesHIV2020}.}\label{app:fig:lincon-inc}
	\includegraphics[width=\textwidth]{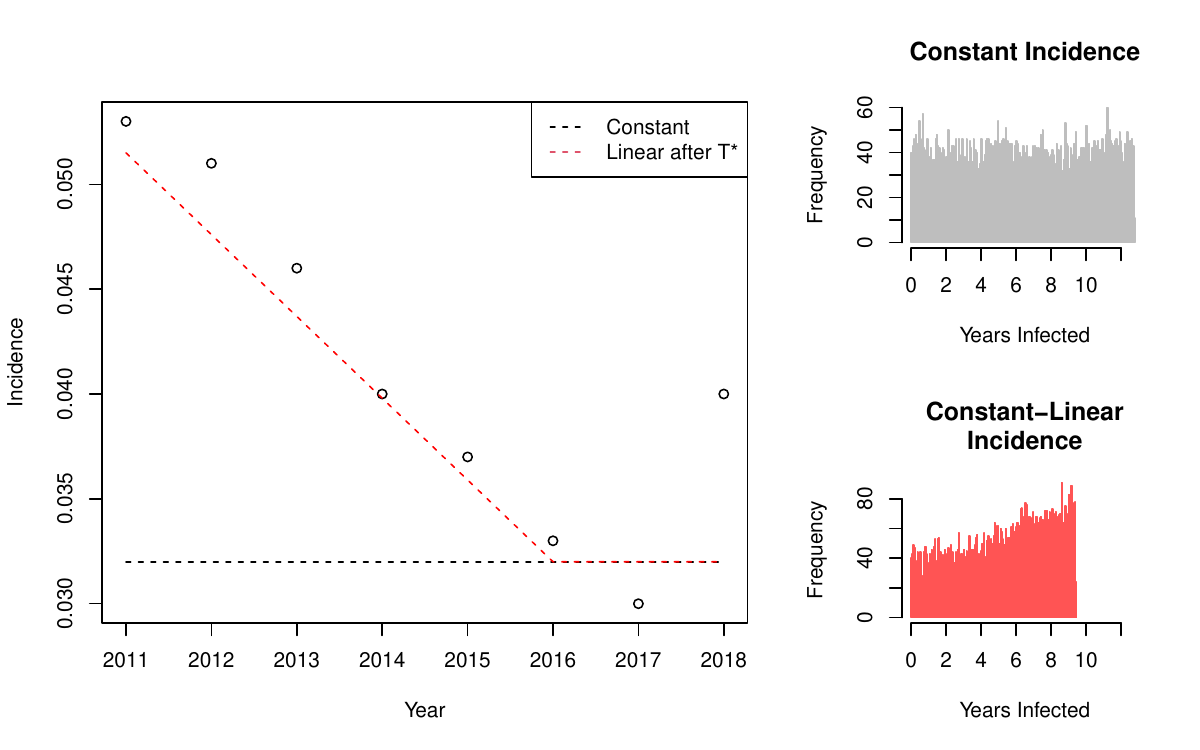}
\end{figure}

\newpage
\printbibliography